\newtheorem{dfn}{\bf Definition}[section]
\newtheorem{thm}{\bf Theorem}[section]
\newtheorem{lem}{\bf Lemma}[section]
\newtheorem{prop}{\bf Proposition}[section]
\def\be{\begin{equation}}
\def\ee{\end{equation}}
\begin{document}


\title{Resource Theory of Contextuality}

\author{Barbara Amaral}
 \affiliation{Departamento de F\'isica e Matem\'atica, CAP - Universidade Federal de S\~ao Jo\~ao del-Rei, 36.420-000, Ouro Branco, MG, Brazil}
 
\date{\today}

\begin{abstract}
In addition to the important role of contextuality in foundations of quantum theory, this intrinsically quantum property has been identified as a potential resource for quantum advantage in different tasks. It is thus of fundamental importance  to study contextuality from the point of view of resource
theories, which provide a powerful framework for the formal treatment of a
property as an operational resource. In this contribution we review  recent developments towards a  resource theory of contextuality  and connections with operational applications of this property.
\end{abstract}

\maketitle

\section{Introduction}

Quantum theory provides a set of rules to predict probabilities of different outcomes in
different  settings. While it predicts probabilities which match with extreme
accuracy the data from actually performed experiments, quantum theory has some peculiar properties
which deviate from how we normally think about systems which have a probabilistic
description. One of these ``strange'' characteristics is the phenomenon of quantum contextuality \cite{KS67,AT18}, which implies that we cannot think about a measurement on a quantum system
as revealing a property which is independent of the set of measurements we choose to
make.  

Contextuality is one of the most striking features of quantum theory and, in addition to its role in the search for a deeper understanding of quantum theory itself, schemes that exploit this intrinsically quantum property
may be connected to the advantage of quantum systems over their classical counterparts.   Research in this direction has received
  a lot of attention lately and several recent  results highlight the power of quantum 
 relative to classical implementations.

Quantum contextuality is a necessary resource for
universal computing in models based on \emph{magic state distillation}
\cite{HWVE14}, in measurement  based quantum computation
 \cite{Raussendorf13,DGBR15} and also in computational models of qubits \cite{BDBOR17}; the presence of contextuality in a given system
 lower bounds the classical memory needed to simulate the experiment and in some situations reproducing the results of sequential measurements on a quantum  system exhibiting contextuality requires more memory than the information-carrying capacity of the system itself; contextuality can be used to
 certify the generation of genuinely random numbers \cite{PAMGMMOHLMM10, UZZWYDDK13}, a major problem in
 various areas, especially in cryptography;
  contextuality offers advantages in the problems of discrimination of states \cite{SSW18,AHLLBL19},
 one-way communication protocols \cite{SHP17}, and self-testing \cite{BRVWCK18}.

The identification of contextuality as a resource for several tasks motivated
 considerable  interest in \emph{resource theories}  for contextuality. Resource theories provide powerful structures for the formal treatment of a property
 as an operational resource, suitable for characterization, quantification, and manipulation \cite{CFS16}.
A resource theory consists essentially of three ingredients: a set of \emph{objects}, which represent the physical entities
which can contain the resource, and a subset of objects called \emph{free objects}, which are the objects that do not contain the resource; a special class of transformations, called \emph{free operations}, which fulfill the essential requirement
of mapping every free object of theory into a free object; finally, a \emph{quantifier}, which maps each object to a real number
that represents quantitatively how much resource this object contain, and which is monotonous under the action of
free operations.

 Recently, major steps have been taken towards the development of a unified resource theory for  contextuality, with the definition of a broad class of free operations with physical interpretation and explicit parametrization and of contextuality quantifiers that can be computed efficiently. In this contribution we review such developments and   connections with operational aspects of contextuality.

 The paper is organized as follows:  in section \ref{sec:resource} we describe the general framework of resource theories; in section \ref{sec:comp} we present the set of objects and free objects in the resource theory of contextuality considered here; in section \ref{sec:free} we study the set of non-contextual wirings, the most general class of free operations for contextuality for which an explicit parametrization is known; in section \ref{sec:quant} we list some of the most important contextuality quantifiers and their properties; in section 
 \ref{sec:app} we review some of the results connecting contextuality and quantum advantages in different tasks; we finish with a discussion and future perspectives in setion \ref{sec:conclusion}.

\section{Mathematical structure of a resource theory}
\label{sec:resource}

We start with a set of objects that  represent the physical entities that may possess the resource and a set of operations that transform one object into another. We assume that we can combine objects and compose operations in the intuitive way\footnote{We give more details below for the case of contextuality and the reader can find the general case in ref. \cite{CFS16}.} and that there is a trivial object $I$ such that $I$ combined with any object $A$ is equivalent to $A$.

The resource theory is defined when we choose a set of \emph{free objects} and \emph{free operations}, that intuitively represent the set of objects and operations that can be constructed or implemented at no cost.
 If the set of free operations is fixed, we say that an object $A$ is \emph{free} if there is a free operation $\mathrm{F}$ such that 
$ \mathrm{F}\left(I\right)=A$.
 If the set of free objects is fixed, we say that an operation $\mathrm{F}$ is \emph{free} if  
 $\mathrm{F}\left(A\right)$ is a free object for every free object $A$.

In this contribution the set of free objects will be defined using a
mathematical characterization of non-contextuality and the set of free operations
would, in principle, be the largest set of linear operations preserving this set. However, we still lack an explicit parametrization of this set, and we consider only a proper subset of free operations, namely, the \emph{non-contextual wirings} defined in ref. \cite{ACTA18}.

The set of free operations  defines a partial order in the set of objects.
 We say that $B\preceq A$ if there is a free operation $\mathrm{F}$ such
that $\mathrm{F}(A) = B$. A resource quantifier is a real function $Q$ defined in the set of objects that preserves the preorder, that is
\be B\preceq A \Rightarrow Q(B)\leq Q(A).\ee
Although resource quantifiers are useful in many situations, we remark that the preorder structure of a resource theory is more
fundamental than a single quantifier, unless the preorder is a total order \cite{CFS16}.

A simple example is the resource theory of beer, where the objects are glasses of beer. The trivial object is of course an empty glass and there is a unique free operation that corresponds to drinking beer. Other costly operations are filling the glass with more beer (costs you some money) or
producing beer (costs you some money, time, space, ...). In this case the preorder is a total order, because one can always compare the amount of beer in two glasses: we say that $G_1 \prec G_2$ if glass $G_1$ has less beer than glass $G_2$. The resource in each glass can be characterized using a single quantifier, namely the volume of beer inside the glass. This is not true if the structure of the set of objects is more complex than in this simple example, as is the case for contextuality.

Resources theories are helpful in the investigation of many important questions, such as:
 When are the resources equivalent under free operations?
 Can we characterize the conditions on two objects such
that one of them can be converted to the other?
  Can a catalysts help?
 Can we find distillation protocols?
  What is the rate at which many copies of an object can
be converted to many copies of other object?
 Is there a resource bit, that is, an object from which all
others can be generated using free operations?
Are there different classes of resources?
 Can we create quantifiers that are related to the practical
applications of the resource? 
In the following sessions we present some of the recent developments towards an abstract unified resource theory for  contextuality and leave the treatment of  these open problems for future work.


\section{Objects and free objects}
\label{sec:comp}

Contextuality is a property exhibited by the statistics of measurements performed on a quantum system
 which shows  that such  statistics is incompatible with the description expected for classical systems. This feature  is
 closely related to the existence of \emph{incompatible measurements} in quantum systems and the compatibitity relations among measurements are the basic ingredient for contextuality. These relations  can be encoded in what we call a \emph{compatibility scenario}.

\begin{dfn}
 A \emph{compatibility scenario} is given by a triple $\Upsilon :=\left(\mathcal{M}, \mathcal{C}, \mathcal{O} \right)$, where $\mathcal{O}$ is a finite
  set, $\mathcal{M}$ is a finite set of random variables in  $\left(\mathcal{O}, \mathcal{P}\left(\mathcal{O}\right)\right)$, and 
  $\mathcal{C}$ is a family of subsets  of $\mathcal{M}$.
The elements $\boldsymbol{\gamma} \in \mathcal{C}$ are called  \emph{contexts} and the
set $\mathcal{C}$ is called the \emph{compatibility cover} of the scenario.
\end{dfn}

The random variables in  $\mathcal{M}$ represent measurements of some property of interest with possible outcomes $\mathcal{O}$ and 
each  $\boldsymbol{\gamma} \in \mathcal{C}$ consists of a  set of  properties that can be jointly accessed.
For a given context $\boldsymbol{\gamma} \in \mathcal{C}$,  the set of possible outcomes for a joint measurement of the elements of $\boldsymbol{\gamma}$ is the Cartesian product of $|\boldsymbol{\gamma}|$ copies of $\mathcal{O}$, denoted by $\mathcal{O}^{\boldsymbol{\gamma}}$.
When  the measurements in $\boldsymbol{\gamma}$ are  jointly performed, a set of outcomes $\boldsymbol{s} \in \mathcal{O}^{\boldsymbol{\gamma}}$
will be observed. This individual run of the experiment will be called a \emph{measurement event}. 

\begin{dfn}
A \emph{behavior} $\boldsymbol{B}$ for the scenario $\left(\mathcal{M}, \mathcal{C}, \mathcal{O} \right)$ is a family of probability distributions over $\mathcal{O}^{\boldsymbol{\gamma}}$, one for each context
$\boldsymbol{\gamma} \in \mathcal{C}$, that is, 
\be \boldsymbol{B} = \left\{p_{\boldsymbol{\gamma}}:  \mathcal{O}^{\boldsymbol{\gamma}}\rightarrow [0,1] \left|\sum_{\boldsymbol{s}\in \mathcal{O}^{\boldsymbol{\gamma}}} p_{\boldsymbol{\gamma}}(\boldsymbol{s})=1, \boldsymbol{\gamma} \in \mathcal{C}\right.\right\}.\ee
For each $\boldsymbol{\gamma}$, $p_{\boldsymbol{\gamma}}(\boldsymbol{s})$ gives the probability of obtaining outcomes $\boldsymbol{s}$ in a joint measurement of the elements of $\boldsymbol{\gamma}$.
\end{dfn}

Each behavior can be associated to an abstract measurement device, called a \emph{box}, with $\left|\mathcal{M}\right|$ input buttons.
The set of contexts $\mathcal{C}$ defines which input buttons can be pressed jointly. 
Each  button has a set with $\left|\mathcal{O}\right|$ associated output lights, one  of which turns on upon pressing that button.  The probabilities $p_{\boldsymbol{\gamma}}(\boldsymbol{s})$ given by the behavior $\boldsymbol{B}$ describe the internal working of the box, that is, $p_{\boldsymbol{\gamma}}(\boldsymbol{s})$ gives the probability of having the  lights $\boldsymbol{s}$ on when we press jointly the buttons in $\boldsymbol{\gamma}$.

 \begin{figure}[h!]
 \centering
 \includegraphics[scale=0.7]{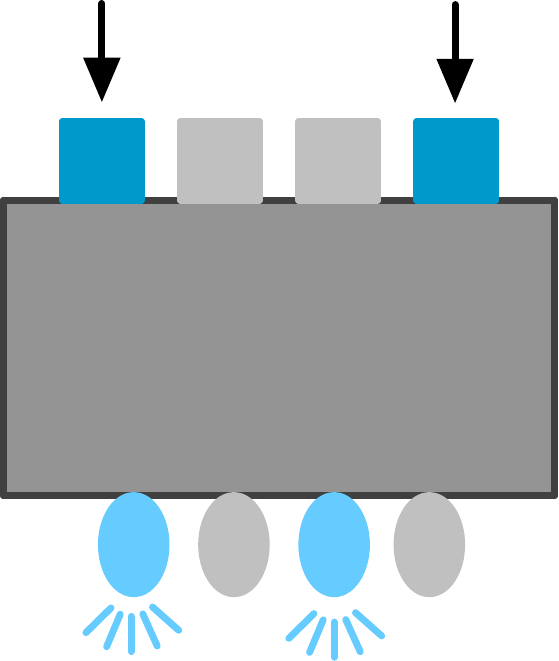}
  \caption{A box corresponding to a contextuality scenario.}
  \label{fig:box}
 \end{figure}

 In an ideal situation, it is generally assumed that behaviors must satisfy the  
\textit{non-disturbance condition}, that states that whenever two contexts $\boldsymbol{\gamma}$ and 
$\boldsymbol{\gamma}'$ overlap,
the marginal $p^{\boldsymbol{\gamma}}_{\boldsymbol{\gamma}\cap \boldsymbol{\gamma}'}$ for $\boldsymbol{\gamma} \cap \boldsymbol{\gamma}'$ computed  from  the distribution for $\boldsymbol{\gamma}$ and  the marginal $p^{\boldsymbol{\gamma}'}_{\boldsymbol{\gamma}\cap \boldsymbol{\gamma}'}$ for $\boldsymbol{\gamma} \cap \boldsymbol{\gamma}'$ computed  from  the distribution for $\boldsymbol{\gamma}'$ must coincide.

\begin{dfn}
\label{definondisturbance}
The \emph{non-disturbance} set $\mathsf{ND}\left(\Upsilon\right)$ is the set of behaviors such that for any two intersecting  contexts $\boldsymbol{\gamma}$ and $\boldsymbol{\gamma}'$
 the consistency relation $ p^{\boldsymbol{\gamma}}_{\boldsymbol{\gamma}\cap \boldsymbol{\gamma}'}= p^{\boldsymbol{\gamma}'}_{\boldsymbol{\gamma}\cap \boldsymbol{\gamma}'}$ holds. 
\end{dfn}

We
ask now if it is possible to define a distribution on
the set $\mathcal{O}^\mathcal{M}$, which specifies assignment of outcomes to all measurements, in a way that the restrictions
 yield the probabilities specified by the behavior on all
 contexts in $\mathcal{C}$ \cite{AB11}.
 
\begin{dfn}
A \emph{global section} for $\mathcal{M}$ is a probability distribution $p_{\mathcal{M}}: \mathcal{O}^\mathcal{M} \rightarrow  [0,1]$. A \emph{global section for a behaviour} 
$\boldsymbol{B} \in \mathsf{ND}\left(\Upsilon\right)$
is a global section for $\mathcal{M}$
such that the marginal probability distribution defined by $p_{\mathcal{M}}$ in each context $\boldsymbol{\gamma} \in \mathcal{C}$ is equal to $p_{\boldsymbol{\gamma}}$.  The behaviors for which there is a global section
are called \emph{non-contextual}. The set of all non-contextual behavior will be denoted by $\mathsf{NC}\left(\Upsilon\right)$
\end{dfn}

As an example, consider the scenario $\Upsilon$ containing three dicotomic measurements $\{x,y,z\}$ whith   measurement $y$  compatible with the two others. Mathematically, $\Upsilon =\left(X, 
\mathcal{C}, \mathcal{O} \right)$ with $\mathcal{O}=\left\{-1,1\right\}$,   $\mathrm{X}=\left\{x, 
y, z\right\}$
and $\mathcal{C}=\left\{\{x,y\}, \{y,z\}\right\}$. 
The graph representation of 
this scenario is  shown in Fig. \ref{fig:c3}.

 \begin{figure}[h!]
 \centering
 \includegraphics[scale=0.7]{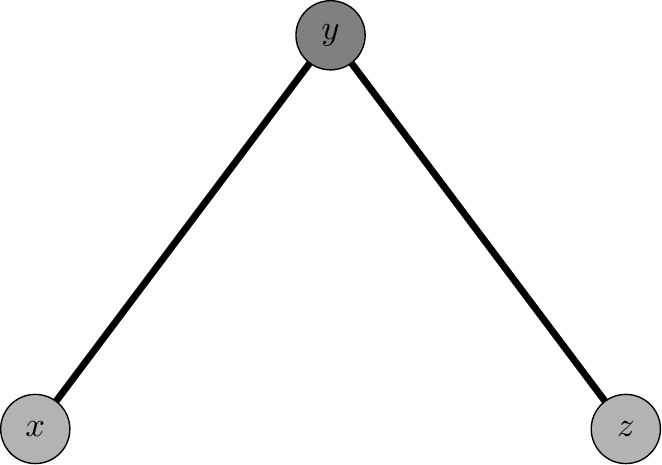}
  \caption{A compatibility scenario with three 
 measurements $x,y,z$ 
 and two contexts, $\left\{x,y\right\}$ and 
 $\left\{y,z\right\}$.}
  \label{fig:c3}
 \end{figure}
 
In this scenario, a behavior consists in specifying probability distributions
\begin{align}
 p_{xy}(ab), \,\, a,b \in \{-1,1\}, &  \ \ \ \
 p_{yz}(bc), \,\, b,c \in \{-1,1\}.
\end{align}
The non-disturbance condition demands that
\be p_y\left(b \right) \coloneqq \sum_a   p_{xy}(ab) = \sum_c 
p_{yz}(bc) \label{eq:ExampleNonDisturb}\ee
 and a behavior  is non-contextual if there is a global probability 
distribution $p_{xyz}\left(abc\right)$ such that 
\begin{align}
 p_{xy}(ab)= \sum_c p_{xyz}\left(abc\right), & \ \ \ \
 p_{yz}(bc)=\sum_a p_{xyz}\left(abc\right).
\end{align}

 The set of objects in the resource theory of contextuality considered in this contribution is the set of non-disturbing boxes and the set of free objects is the set of non-contextual boxes in arbitrary compatibility scenarios with finite set of measurements and outcomes. 


\section{Free operations}
\label{sec:free}

\subsection{Pre-processing operations}
We first consider compositions of the initial box $\boldsymbol{B}=\left(\mathcal{M},\mathcal{C},\mathcal{O}\right)$ with a 
non-contextual pre-processing box $\boldsymbol{B}_\mathrm{PRE}=\left(\mathcal{M}_{\mathrm{PRE}},\mathcal{C}_{\mathrm{PRE}},\mathcal{O}_{\mathrm{PRE}}\right)$ (see fig. \ref{fig:prepost}). We demand that the output lights of $\boldsymbol{B}_\mathrm{PRE}$ be in one-to-one correspondence with the input buttons of $\boldsymbol{B}$ in such a way that if a set of lights $\boldsymbol{r}$ in $\boldsymbol{B}_\mathrm{PRE}$ can be on jointly, the corresponding buttons $\boldsymbol{\gamma}\left(\boldsymbol{r}\right)$ in $\mathcal{M}$ form a context of $\mathcal{C}$.  This restriction guarantees that the composition  will not associate outcomes that can be produced in box $\boldsymbol{B}_\mathrm{PRE}$ with buttons that can not be jointly pressed in the box $\boldsymbol{B}$, ensuring that the transformation is consistently defined. This operation results in a box $\mathcal{W}_\mathrm{PRE}(\boldsymbol{B})$ with input buttons $\mathcal{M}_{\mathrm{PRE}}$, compatibility graph $\mathcal{C}_{\mathrm{PRE}}$, and output lights
$\mathcal{O}$. The behavior of $\mathcal{W}_\mathrm{PRE}(\boldsymbol{B})$ will be given by
\be
p_{\boldsymbol{\beta}}\left(\boldsymbol{s}\right)= \sum_{\boldsymbol{r}} p_{\boldsymbol{\gamma}(\boldsymbol{r})}(\boldsymbol{s})p_{\boldsymbol{\beta}}(\boldsymbol{r})
\ee
where $\boldsymbol{\beta}$ is any context in $\mathcal{C}_{\mathrm{PRE}}$, $\boldsymbol{r}$ runs over the possible output lights associated with context $\boldsymbol{\beta}$ in box $\boldsymbol{B}_\mathrm{PRE}$,
$\boldsymbol{\gamma}(\boldsymbol{r}) \in \mathcal{C}$ is the context of box $\boldsymbol{B}$ corresponding to $\boldsymbol{r}$ and $\boldsymbol{s}$ is one of the possible set of output lights associated with context $\boldsymbol{\gamma}(\boldsymbol{r})$.

\subsection{Post-processing operations}
We can also compose the initial box $\boldsymbol{B}$ with  a non-contextual post-processing box $\boldsymbol{B}_{\mathrm{POST}}=\left(\mathcal{M}_{\mathrm{POST}},\mathcal{C}_{\mathrm{POST}},\mathcal{O}_{\mathrm{POST}}\right)$ (see fig. \ref{fig:prepost}).  Analogously, to ensure that the transformation is consistently defined, we demand that the output lights of $\boldsymbol{B}$ be in one-to-one correspondence with the input buttons of $\boldsymbol{B}_{\mathrm{POST}}$ in such a way that if a set of lights $\boldsymbol{s}$ in $\boldsymbol{B}_\mathrm{POST}$ can be on jointly, the corresponding buttons $\boldsymbol{\delta}\left(\boldsymbol{s}\right)$ in $\mathcal{M}_{\mathrm{POST}}$ are a context of $\mathcal{C}_{\mathrm{POST}}$.
We have now a final box $\mathcal{W}_\mathrm{POST}(\boldsymbol{B})$ with input buttons $\mathcal{M}$, compatibility graph $\mathcal{C}$, and output lights
$\mathcal{O}_{\mathrm{POST}}$. The behavior of $\mathcal{W}_\mathrm{POST}(\boldsymbol{B})$ will be given by
\be
p_{\boldsymbol{\gamma}}\left(\boldsymbol{t}\right)= \sum_{\boldsymbol{s}} p_{\boldsymbol{\delta}(\boldsymbol{s})}(\boldsymbol{t})p_{\boldsymbol{\gamma}}(\boldsymbol{s})
\ee
where $\boldsymbol{\gamma}$ is any context in $\mathcal{C}$, $\boldsymbol{s}$ runs over the possible sets of output lights associated with context $\boldsymbol{\gamma}$ in box $\boldsymbol{B}$,
$\boldsymbol{\delta}(\boldsymbol{s}) \in \mathcal{C}_\mathrm{POST}$ is the context of box $\boldsymbol{B}_\mathrm{POST}$ corresponding to $\boldsymbol{s}$ and $\boldsymbol{t}$ is one of the possible output lights associated with context $\boldsymbol{\delta}(\boldsymbol{s})$.

\begin{figure}[h!]
\centering
\begin{subfigure}[t]{0.4\textwidth}
\centering
\includegraphics[scale=0.25]{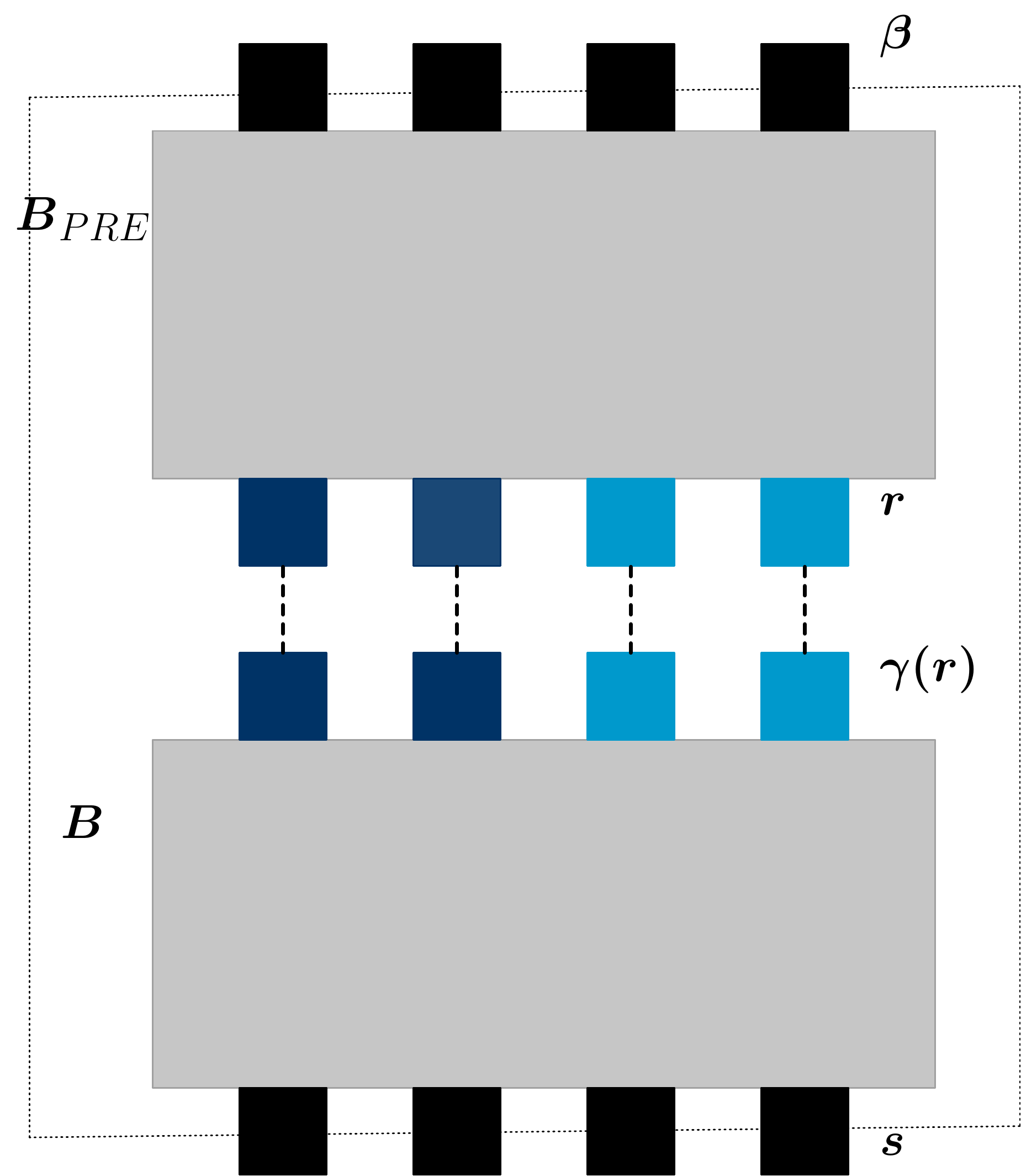}
\caption{}
\end{subfigure}
\begin{subfigure}[t]{0.4\textwidth}
\centering
\includegraphics[scale=0.25]{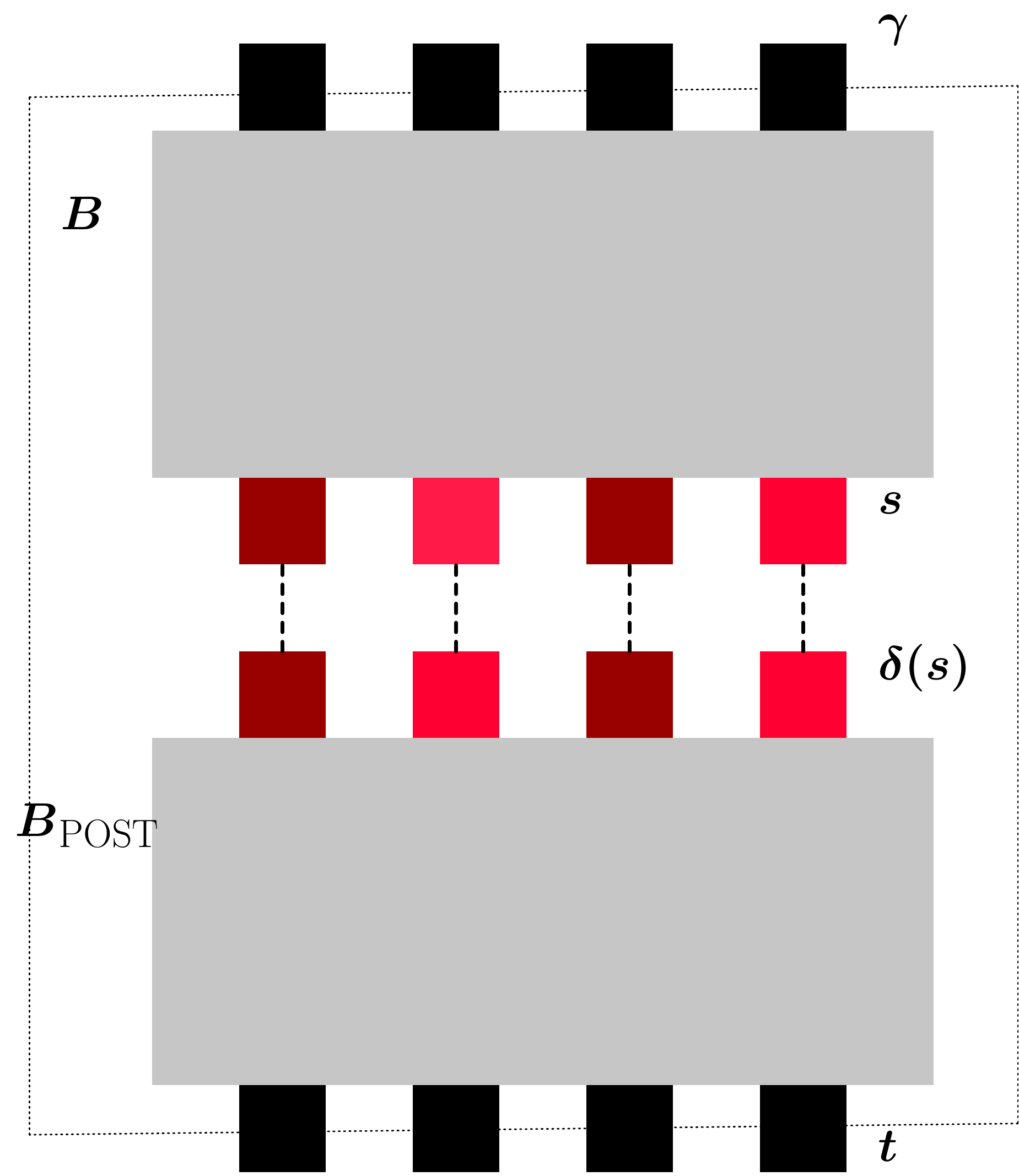}
\caption{}
\end{subfigure}
\caption{(a) A pre-processing operation that transforms the initial box $\boldsymbol{B}$ into the final box $\mathcal{W}_\mathrm{PRE}(\boldsymbol{B})$. When context $\boldsymbol{\beta}$ is chosen for box $\boldsymbol{B}_\mathrm{PRE}$ and outcome $\boldsymbol{r}$ is obtained, we have a corresponding context $\boldsymbol{\gamma}\left(\boldsymbol{r}\right)$ that is now pressed in box $\boldsymbol{B}$. If outcome $\boldsymbol{s}$ is obtained in the second box, we say that $\boldsymbol{s}$ was the outcome for context $\boldsymbol{\beta}$ in the final box. (b) A post-processing operation that transforms the initial box $\boldsymbol{B}$ into the final box $\mathcal{W}_\mathrm{POST}(\boldsymbol{B})$.  When context $\boldsymbol{\gamma}$ is chosen for box $\boldsymbol{B}$ and outcome $\boldsymbol{s}$ is obtained, we have a corresponding context $\boldsymbol{\delta}(\boldsymbol{s})$ that is now pressed in box $\boldsymbol{B}_{\mathrm{POST}}$. If outcome $\boldsymbol{t}$ is obtained in the second box, we say that $\boldsymbol{t}$ was the outcome for context $\boldsymbol{\gamma}$ in the final box. }
\label{fig:prepost}
\end{figure}

\subsection{Non-contextual Wirings}

A non-contextual wiring is a composition of a pre-processing and a post-processing operation  with the additional freedom that the behavior of the post-processing box can depend on the input $\boldsymbol{\beta}$ and output $\boldsymbol{r}$ of the pre-processing box. The behavior of the post-processing box is then of the form
$
p_{\boldsymbol{\delta}}\left(\boldsymbol{t}\vert \boldsymbol{\beta},\boldsymbol{r}\right),
$
but in such a way that each output light of the post-processing box is causally influenced only by the inputs and outputs 
of the pre-processing box that are associated with it. This  additional restriction is crucial in order not to create contextuality with the post-processing itself. To understand this restriction, consider a run of the wiring as shown in fig. \ref{fig:wirings}.

\begin{figure}[h!]
\centering
\includegraphics[scale=1.2]{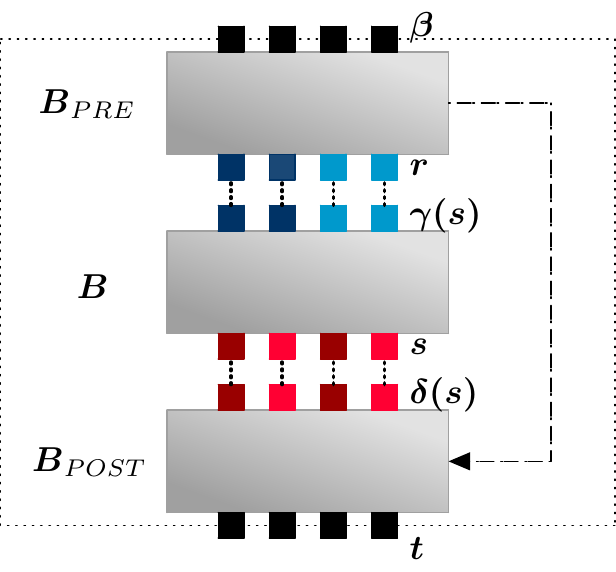}
 \caption{ A non-contextual wiring $\mathcal{W}_\mathrm{NC}$ with respect to pre- and post-processing boxes $\boldsymbol{B}_\mathrm{PRE}$ and $\boldsymbol{B}_\mathrm{POST}$, respectively,
 mapping an initial box $\boldsymbol{B}$ into a final box $\mathcal{W}_\mathrm{NC}(\boldsymbol{B})$. 
 The buttons and lights of $\mathcal{W}_\mathrm{NC}(\boldsymbol{B})$ are given by the buttons of $\boldsymbol{B}_\mathrm{PRE}$ and the lights of $\boldsymbol{B}_\mathrm{POST}$, respectively.
Only the lights (buttons) of $\boldsymbol{B}$ of the same color can be on (pressed) at the same time. 
The behavior of $\boldsymbol{B}_\mathrm{POST}$ is causally influenced by $\boldsymbol{B}_\mathrm{PRE}$, but in a restricted way such that the statistics of each output light of $\boldsymbol{B}_\mathrm{POST}$ depends only on the buttons and lights of $\boldsymbol{B}_\mathrm{PRE}$ that are associated with it. As a result, if $\boldsymbol{B}$ is non-contextual so is $\mathcal{W}_\mathrm{NC}(\boldsymbol{B})$.
 \label{fig:wirings}
}
\end{figure}

Let $\beta_i$ be an input button of $\boldsymbol{B}_{\mathrm{PRE}}$ in context $\boldsymbol{\beta}$ and $r_i$ be the associated outcome in $\boldsymbol{r}$. Let $\gamma_i$ be the input button in $\boldsymbol{B}$ associated to $r_i$ and $s_i$ be the corresponding outcome in $\boldsymbol{s}$. Let $\delta_i$ be the input button in $\boldsymbol{B}_{\mathrm{POST}}$ associated to $s_i$ and  $t_i$ be the corresponding outcome in $\boldsymbol{t}$. The consistency conditions in the definition of pre and post-processing operations guarantee that all these input buttons and output lights are well defined (see \cite{ACTA18} for details). We demand that
\be
p_{\boldsymbol{\delta}}\left(\boldsymbol{t}\vert \boldsymbol{\beta},\boldsymbol{r}\right)=\sum_{\phi} p(\phi) \prod_ip_{\delta_i\left(s_i\right)}\left(t_i\vert \beta_i, r_i\right)
\label{eq:post_rest}
\ee
where $\phi \in \Phi$ is an arbitrary additional  variable, $p(\phi)$ is a probability distribution over $\Phi$, and $p_{\delta_i}\left(t_i\vert \beta_i, r_i\right)$  is the probability of having outcome $t_i$ for button $\delta_i\left(s_i\right)$ on box $\boldsymbol{B}_{\mathrm{POST}}$ given $\phi$, $\beta_i$ and $r_i$. Fig. \ref{fig:dep} shows an example where each set of buttons and lights has exactly two elements.

\begin{figure}
  \begin{minipage}[c]{0.4\textwidth}
    \centering
\includegraphics[scale=0.4]{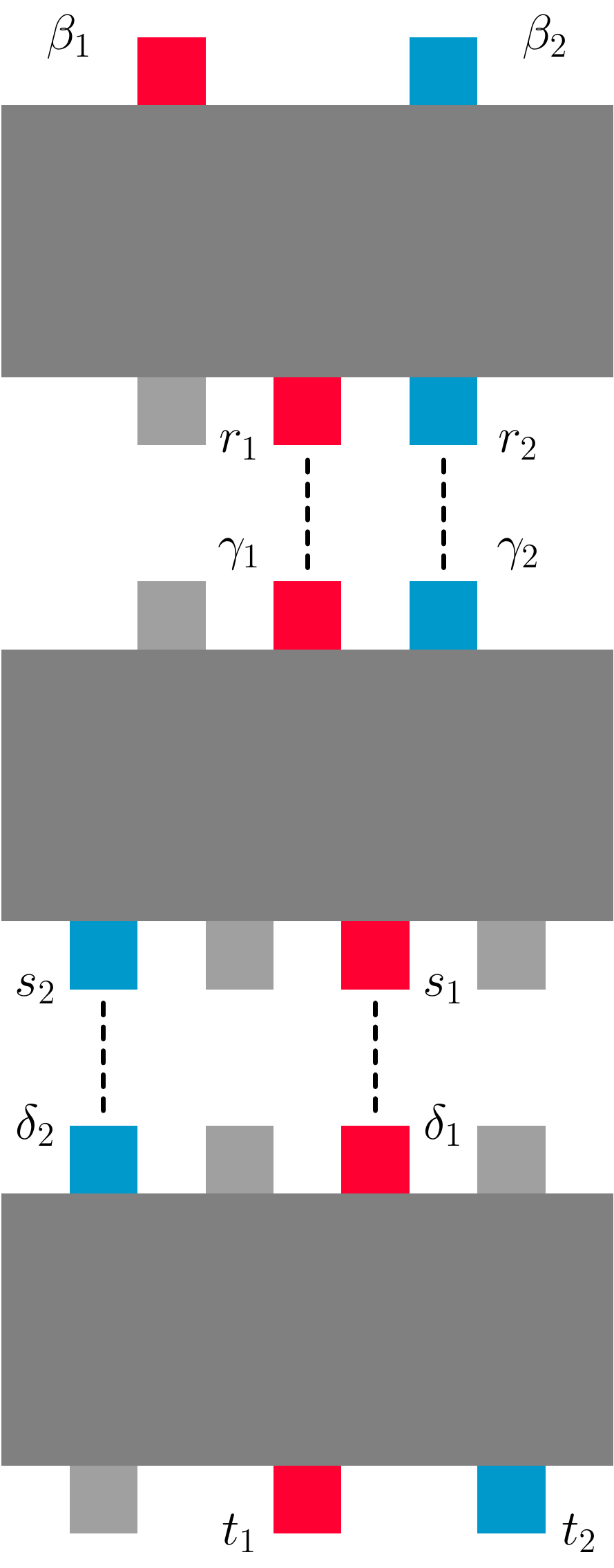}
  \end{minipage}
  \begin{minipage}[c]{0.55\textwidth}
    \caption{One round of the non-contextual wiring where context $\boldsymbol{\beta}=\left\{\beta_1, \beta_2\right\}$ is chosen for box $\boldsymbol{B}_{\mathrm{PRE}}$, leading to the sequences of buttons and lights $\left(\beta_1,r_1,\gamma_1,s_1,\delta_1,t_1\right)$ and $\left(\beta_2,r_2,\gamma_2,s_2,\delta_2,t_2\right)$. The behavior of box $\boldsymbol{B}_{\mathrm{POST}}$ can depend on box $\boldsymbol{B}_{\mathrm{PRE}}$, but with the restriction that the probability of output $t_1$ for button $\gamma_1$ can depend only on $\beta_1$ and $r_1$ and the probability of output $t_2$ for button $\delta_2$ can depend only on $\beta_2$ and $r_2$, and hence $p_{\boldsymbol{\delta}\left(\boldsymbol{s}\right)}\left(\boldsymbol{t}\vert \boldsymbol{\beta},\boldsymbol{r}\right)=p_{\delta_1}\left(t_1 \vert \beta_1, r_1\right)\times p_{\delta_2}\left(t_2 \vert \beta_2, r_2\right)$. If we consider the situation in which button $\beta_2$ is pressed only after outcome $t_1$ is recorded, this restriction implies that the behavior of the post-processing box depends only on the sequence  $\left(\beta_2,r_2,\gamma_2,s_2,\delta_2,t_2\right)$ and the  wiring has no memory of the previous round $\left(\beta_1,r_1,\gamma_1,s_1,\delta_1,t_1\right)$. We want the set of free operations to be convex and hence we also allow for convex combinations of such instances, given by the sum over the variable $\phi$ in equation \eqref{eq:post_rest}.}
 \label{fig:dep}
  \end{minipage}
\end{figure}

This composition defines   a final box $\mathcal{W}_\mathrm{NC}\left(\boldsymbol{B}\right)$ with input buttons $\mathcal{M}_{\mathrm{PRE}}$, compatibility graph $\mathcal{C}_{\mathrm{PRE}}$, and output lights
$\mathcal{O}_{\mathrm{POST}}$. The behavior of the final box will be given by
\be
p_{\boldsymbol{\beta}}\left(\boldsymbol{t}\right)= \sum_{\boldsymbol{s},\boldsymbol{r}} p_{\boldsymbol{\delta}(\boldsymbol{s})}(\boldsymbol{t})p_{\boldsymbol{\gamma}(\boldsymbol{r})}(\boldsymbol{s})p_{\boldsymbol{\beta}}(\boldsymbol{r})
\ee
where $\boldsymbol{\beta} \in \mathcal{C}_{\mathrm{PRE}}$, $\boldsymbol{r}$ runs over the possible output lights associated with context $\boldsymbol{\beta}$ in box $\boldsymbol{B}_{\mathrm{PRE}}$,
$\boldsymbol{\gamma}(\boldsymbol{r}) \in \mathcal{C}$ is the context of box $\boldsymbol{B}$ corresponding to $\boldsymbol{r}$, $\boldsymbol{s}$ runs over the output lights associated with context $\boldsymbol{\gamma}(\boldsymbol{r})$, $\boldsymbol{\delta}(\boldsymbol{s})\in \mathcal{C}_{\mathrm{POST}}$ is the context of box $\boldsymbol{B}_{\mathrm{POST}}$ corresponding to $\boldsymbol{s}$, and $\boldsymbol{t}$ is one of the possible output lights associated with context $\boldsymbol{\delta}(\boldsymbol{s})$.

The set of all non-contextual wirings will be denoted by $\mathsf{NCW}$. Self-consistency of the theory requires that non-contextual wirings satisfy the following property, proven here for context with two buttons. The general case is completely analogous and a general proof can be found in the supplemental material of ref. \cite{ACTA18}.
\begin{lem}[Non-disturbance preservation] 
\label{Lem:ND}
The class of boxes $\mathsf{ND}$ is closed under all wirings in $\mathsf{NCW}$.
\end{lem}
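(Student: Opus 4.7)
My plan is to pick two overlapping contexts $\boldsymbol{\beta},\boldsymbol{\beta}'\in\mathcal{C}_\mathrm{PRE}$ of the final box $\mathcal{W}_\mathrm{NC}(\boldsymbol{B})$, compute the marginal of $p_{\boldsymbol{\beta}}(\boldsymbol{t})$ onto $\boldsymbol{\beta}\cap\boldsymbol{\beta}'$, and show that the result depends only on the shared buttons. For the two-button case I take $\boldsymbol{\beta}=\{\beta_1,\beta_2\}$ and a second context $\boldsymbol{\beta}'$ containing $\beta_1$, so the goal is to compute $\sum_{t_2} p_{\boldsymbol{\beta}}(t_1,t_2)$ and show it depends only on $\beta_1$.

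Starting from the final-box formula, I would substitute the causal factorization in Eq.~\eqref{eq:post_rest} for $p_{\boldsymbol{\delta}(\boldsymbol{s})}(\boldsymbol{t}\vert\boldsymbol{\beta},\boldsymbol{r})$, which separates the post-processing into a product $p_{\delta_1(s_1)}(t_1\vert\beta_1,r_1)\,p_{\delta_2(s_2)}(t_2\vert\beta_2,r_2)$ averaged over the shared variable $\phi$. I would then marginalise in a specific order, peeling off one pair of indices at a time: first sum over $t_2$, which kills the factor $p_{\delta_2(s_2)}(t_2\vert\beta_2,r_2)$ since it normalises to $1$; next sum over $s_2$, which by the non-disturbance of $\boldsymbol{B}$ reduces $p_{\boldsymbol{\gamma}(\boldsymbol{r})}(s_1,s_2)$ to $p_{\gamma_1(r_1)}(s_1)$; and finally sum over $r_2$, which by the non-disturbance of the free (indeed non-contextual) pre-processing box $\boldsymbol{B}_\mathrm{PRE}$ reduces $p_{\boldsymbol{\beta}}(r_1,r_2)$ to $p_{\beta_1}(r_1)$. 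The upshot is
\begin{equation}
\sum_{t_2} p_{\boldsymbol{\beta}}(t_1,t_2)=\sum_{\phi,r_1,s_1} p(\phi)\,p_{\delta_1(s_1)}(t_1\vert\beta_1,r_1)\,p_{\gamma_1(r_1)}(s_1)\,p_{\beta_1}(r_1),
\end{equation}
an expression depending only on $\beta_1$, $t_1$, and ingredients of $\boldsymbol{B}_\mathrm{PRE}$, $\boldsymbol{B}$ and $\boldsymbol{B}_\mathrm{POST}$. Exactly the same computation starting from $\boldsymbol{\beta}'$ gives the same answer, so non-disturbance holds on every pair of overlapping contexts.

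The essential ingredients are: (i) the Bayes-like causal restriction \eqref{eq:post_rest} on the post-processing, which is precisely what allows summing out $t_2$ to decouple the side-$1$ factor; (ii) non-disturbance of the input box $\boldsymbol{B}$, applied context by context after conditioning on $\boldsymbol{r}$; and (iii) non-disturbance of $\boldsymbol{B}_\mathrm{PRE}$, which holds automatically since free pre-processing boxes are non-contextual. I do not foresee a real obstacle in the two-button case; the slight subtlety is bookkeeping the maps $\boldsymbol{r}\mapsto\boldsymbol{\gamma}(\boldsymbol{r})$ and $\boldsymbol{s}\mapsto\boldsymbol{\delta}(\boldsymbol{s})$ and verifying that after conditioning on a fixed $r_1$ the component $\gamma_1(r_1)$ is well defined independently of the surrounding context, which follows from the consistency conditions imposed in the definitions of pre- and post-processing. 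The extension to contexts of arbitrary size is a straightforward iteration of the same three-step marginalisation pattern, partitioning indices into those inside and those outside $\boldsymbol{\beta}\cap\boldsymbol{\beta}'$ and marginalising the outside indices in the order $t$, $s$, $r$.
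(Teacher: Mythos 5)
Your proposal is correct and follows essentially the same route as the paper's proof: substitute the causal factorization \eqref{eq:post_rest} and marginalise in the order $t_2$, $s_2$, $r_2$, using normalisation, non-disturbance of $\boldsymbol{B}$, and the properties of $\boldsymbol{B}_\mathrm{PRE}$ to collapse each factor onto the shared button. You are in fact slightly more explicit than the paper about why the reduced marginal $p_{\gamma_1(r_1)}(s_1)$ is independent of the accompanying $r_2$ and about comparing the results from the two overlapping contexts, but the argument is the same.
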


\begin{proof}
 Suppose that  $\boldsymbol{\beta}=\left\{\beta_1,\beta_2\right\}$. Since when one button is pressed exactly one light turns on, we have $\boldsymbol{r}=\left\{r_1,r_2\right\}$, $\boldsymbol{\gamma}\left(\boldsymbol{r}\right)=\left\{\gamma_1\left(r_1\right),\gamma_2\left(r_2\right)\right\}$, $\boldsymbol{s}=\left\{s_1,s_2\right\}$, $\boldsymbol{\delta}\left(\boldsymbol{s}\right)=\left\{\delta_1\left(s_1\right),\delta_2\left(s_2\right)\right\}$, as in fig. \ref{fig:dep}.
 We have that 
 \begin{eqnarray}
  \sum_{t_2}p_{\boldsymbol{\beta}}\left(\boldsymbol{t}\right)&=&\sum_{t_2} \sum_{\boldsymbol{r},\boldsymbol{s}} p_{\boldsymbol{\delta}\left(\boldsymbol{s}\right)}\left(\boldsymbol{t}\right)p_{\boldsymbol{\gamma}\left(\boldsymbol{r}\right)}\left(\boldsymbol{s}\right)p_{\boldsymbol{\beta}}\left(\boldsymbol{r}\right)  \\
  &=& \sum_{t_2} \sum_{\boldsymbol{r},\boldsymbol{s}} \sum_{\phi}p\left(\phi\right)p_{\delta_1\left(s_1\right)}\left(t_1\vert \beta_1, r_1, \phi\right)p_{\delta_2\left(s_2\right)}\left(t_2\vert \beta_2, r_2, \phi\right)p_{\boldsymbol{\gamma}\left(\boldsymbol{r}\right)}\left(\boldsymbol{s}\right)p_{\boldsymbol{\beta}}\left(\boldsymbol{r}\right)  \\
  &=&\sum_{\boldsymbol{r},\boldsymbol{s}} \sum_{\phi}p\left(\phi\right)p_{\delta_1\left(s_1\right)}\left(t_1\vert \beta_1, r_1, \phi\right)\left[\sum_{t_2} p_{\delta_2\left(s_2\right)}\left(t_2\vert \beta_2, r_2, \phi\right)\right]p_{\boldsymbol{\gamma}\left(\boldsymbol{r}\right)}\left(\boldsymbol{s}\right)p_{\boldsymbol{\beta}}\left(\boldsymbol{r}\right)  \\
   &=&\sum_{r_1,r_2,s_1,s_2} \sum_{\phi}p\left(\phi\right)p_{\delta_1\left(s_1\right)}\left(t_1\vert \beta_1, r_1, \phi\right)p_{\boldsymbol{\gamma}\left(\boldsymbol{r}\right)}\left(\boldsymbol{s}\right)p_{\boldsymbol{\beta}}\left(\boldsymbol{r}\right)  \\
   &=&\sum_{r_1,s_1} \sum_{\phi}p\left(\phi\right)p_{\delta_1\left(s_1\right)}\left(t_1\vert \beta_1, r_1, \phi\right)\sum_{r_2,s_2} p_{\boldsymbol{\gamma}\left(\boldsymbol{r}\right)}\left(\boldsymbol{s}\right)p_{\boldsymbol{\beta}}\left(\boldsymbol{r}\right)  \\
   &=&\sum_{r_1,s_1} \sum_{\phi}p\left(\phi\right)p_{\delta_1\left(s_1\right)}\left(t_1\vert \beta_1, r_1, \phi\right) p_{\gamma_1(r_1)}(s_1)\sum_{r_2}p_{\boldsymbol{\beta}}\left(\boldsymbol{r}\right)  \\
   &=&\sum_{r_1,s_1} \sum_{\phi}p\left(\phi\right)p_{\delta_1\left(s_1\right)}\left(t_1\vert \beta_1, r_1, \phi\right) p_{\gamma_1(r_1)}(s_1)p_{\beta_1}\left(r_1\right)\\
   &=&p_{\beta_1}\left(t_1\right).
 \end{eqnarray}
\end{proof}

In addition, to give valid free operations, $\mathsf{NCW}$ must fulfill the following requirement, proven here for contexts with two buttons \cite{ACTA18}.
\begin{thm}[Non-contextuality preservation]
\label{teoncpreservation}
The class of boxes $\mathsf{NC}$ is closed under all wirings in $\mathsf{NCW}$.
 \end{thm}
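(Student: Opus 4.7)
The plan is to build a global section for $\mathcal{W}_\mathrm{NC}(\boldsymbol{B})$ out of global sections of its constituents. Since the pre- and post-processing boxes $\boldsymbol{B}_\mathrm{PRE}$, $\boldsymbol{B}_\mathrm{POST}$ are non-contextual by hypothesis and $\boldsymbol{B}$ is non-contextual by assumption, fix global sections $p_\mathrm{PRE}$ on $\mathcal{O}_\mathrm{PRE}^{\mathcal{M}_\mathrm{PRE}}$ and $p_\mathcal{M}$ on $\mathcal{O}^\mathcal{M}$. The remaining structural ingredient is the factorised decomposition \eqref{eq:post_rest}, which asserts that each output of $\boldsymbol{B}_\mathrm{POST}$ depends only on the data of its own wire once one conditions on the classical variable $\phi$.

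The candidate global section for the final box is
\[
q(\boldsymbol{\tau}) \;=\; \sum_{\phi} p(\phi) \sum_{\boldsymbol{\rho},\boldsymbol{\sigma}} p_\mathrm{PRE}(\boldsymbol{\rho})\, p_\mathcal{M}(\boldsymbol{\sigma}) \prod_{\beta\in\mathcal{M}_\mathrm{PRE}} p_{\delta(\sigma_{\gamma(\rho_\beta)})}\bigl(\tau_\beta \mid \beta,\rho_\beta,\phi\bigr),
\]
where $\boldsymbol{\rho}:\mathcal{M}_\mathrm{PRE}\to\mathcal{O}_\mathrm{PRE}$, $\boldsymbol{\sigma}:\mathcal{M}\to\mathcal{O}$, and $\boldsymbol{\tau}:\mathcal{M}_\mathrm{PRE}\to\mathcal{O}_\mathrm{POST}$. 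The composite wire map $\beta \mapsto \gamma(\rho_\beta) \mapsto \delta(\sigma_{\gamma(\rho_\beta)})$ is well defined by the one-to-one correspondences imposed when the pre- and post-processing operations were introduced, and $q$ is manifestly a probability distribution because each factor, each global section and $p(\phi)$ is.

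The verification step is to check that for every context $\boldsymbol{\beta}\in\mathcal{C}_\mathrm{PRE}$, the marginal $\sum_{\tau_{\beta'}:\,\beta'\notin\boldsymbol{\beta}} q(\boldsymbol{\tau})$ reproduces the behavior $p_{\boldsymbol{\beta}}(\boldsymbol{t})$ of $\mathcal{W}_\mathrm{NC}(\boldsymbol{B})$. Summing each $\tau_{\beta'}$ with $\beta'\notin\boldsymbol{\beta}$ collapses the corresponding post-processing factor to $1$, mirroring the telescoping already used in the proof of Lemma \ref{Lem:ND}. Marginalising $p_\mathrm{PRE}$ over components outside $\boldsymbol{\beta}$ yields $p_{\boldsymbol{\beta}}^{\mathrm{PRE}}(\boldsymbol{r})$ because $p_\mathrm{PRE}$ is a global section of $\boldsymbol{B}_\mathrm{PRE}$; analogously marginalising $p_\mathcal{M}$ over the components outside $\boldsymbol{\gamma}(\boldsymbol{r})$ yields $p_{\boldsymbol{\gamma}(\boldsymbol{r})}(\boldsymbol{s})$. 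Finally, the surviving sum $\sum_{\phi}p(\phi)\prod_{\beta\in\boldsymbol{\beta}} p_{\delta(s_\beta)}(\tau_\beta \mid \beta,r_\beta,\phi)$ is exactly $p_{\boldsymbol{\delta}(\boldsymbol{s})}(\boldsymbol{t}\mid\boldsymbol{\beta},\boldsymbol{r})$ by \eqref{eq:post_rest}, and reassembling the three sums reproduces the behavior displayed for the final box.

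The only real obstacle is bookkeeping: keeping track of the two-layer wire map $\beta \mapsto \gamma(\rho_\beta)\mapsto \delta(\sigma_{\gamma(\rho_\beta)})$ and ensuring that $\boldsymbol{\gamma}(\boldsymbol{r})\in\mathcal{C}$ whenever $\boldsymbol{\beta}\in\mathcal{C}_\mathrm{PRE}$, and analogously for $\delta$. But this is precisely the consistency requirement baked into the definitions of the pre- and post-processings in section \ref{sec:free}, so no extra input beyond those definitions and the factorisation \eqref{eq:post_rest} is needed. Writing the general case (arbitrary $|\boldsymbol{\beta}|$) is then no harder than the two-button case, exactly as remarked for Lemma \ref{Lem:ND}.
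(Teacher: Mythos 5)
Your proof is correct, but it takes a genuinely different route from the paper's. The paper's proof expands each of the three behaviors $p_{\boldsymbol{\beta}}(\boldsymbol{r})$, $p_{\boldsymbol{\gamma}(\boldsymbol{r})}(\boldsymbol{s})$ and $p_{\boldsymbol{\delta}(\boldsymbol{s})}(\boldsymbol{t}\vert\boldsymbol{\beta},\boldsymbol{r})$ as mixtures of single-button product response functions over hidden variables $\xi,\psi,\phi$, substitutes these into the composition formula, and regroups the sums to exhibit the final behavior as a mixture over $(\xi,\psi,\phi)$ of product responses; that is, it certifies non-contextuality through the hidden-variable (mixture-of-products) characterization, implicitly invoking its equivalence with the global-section definition actually stated in the text. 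You instead build the global section of $\mathcal{W}_\mathrm{NC}(\boldsymbol{B})$ explicitly as a coupling of global sections $p_\mathrm{PRE}$ and $p_\mathcal{M}$ with the factorized post-processing kernel of eq.~\eqref{eq:post_rest}, and then verify the marginals. This buys two things: the argument works directly with the paper's definition of non-contextuality with no appeal to the equivalence above, and it is written for contexts of arbitrary size from the outset, whereas the paper's computation is only spelled out for two-button contexts. The paper's version, in exchange, makes the analogy with local hidden-variable decompositions in Bell scenarios transparent. The one step in your argument that deserves to be made fully explicit is the marginalization of $p_\mathcal{M}$: for each fixed $\boldsymbol{\rho}$ you group the sum over $\boldsymbol{\sigma}$ by its restriction to the set $\boldsymbol{\gamma}(\boldsymbol{r})=\{\gamma(r_\beta)\}_{\beta\in\boldsymbol{\beta}}$, which varies with $\boldsymbol{r}$; this is legitimate precisely because the consistency requirement on the pre-processing guarantees $\boldsymbol{\gamma}(\boldsymbol{r})\in\mathcal{C}$ whenever $\boldsymbol{\beta}\in\mathcal{C}_\mathrm{PRE}$, so that the resulting marginal of the global section is the behavior $p_{\boldsymbol{\gamma}(\boldsymbol{r})}$ --- you do note this, and it is the only place where the non-contextuality of $\boldsymbol{B}$ enters.
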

 
 \begin{proof}
  Non-conextuality of $\boldsymbol{B}_{\mathrm{PRE}}$, $\boldsymbol{B}$ and $B_{\mathrm{POST}}$  and condition \eqref{eq:post_rest} imply that 
  \begin{align}
   p_{\boldsymbol{\beta}}\left(\boldsymbol{r}\right)&=\sum_{\xi}p\left(\xi\right)p_{\beta_1}\left(r_1\vert \xi\right)p_{\beta_2}\left(r_2\vert \xi\right),\\
   p_{\boldsymbol{\gamma}\left(\boldsymbol{r}\right)}\left(\boldsymbol{s}\right)&=\sum_{\psi}p\left(\psi\right)p_{\gamma_1\left(r_1\right)}\left(s_1\vert \psi\right)p_{\gamma_2\left(r_2\right)}\left(s_2\vert \psi\right),\\
   p_{\boldsymbol{\delta}\left(\boldsymbol{s}\right)}\left(\boldsymbol{t}\right)&=\sum_{\phi}p\left(\phi\right)p_{\delta_1\left(s_1\right)}\left(t_1\vert \beta_1,r_1, \phi\right)p_{\delta_2\left(r_2\right)}\left(t_2\vert \beta_2,r_2,\phi\right).
  \end{align}
  This in turn implies that 
  \begin{align}
p_{\boldsymbol{\beta}}\left(\boldsymbol{t}\right) &= \sum_{\boldsymbol{r},\boldsymbol{s}} p_{\boldsymbol{\delta}\left(\boldsymbol{s}\right)}\left(\boldsymbol{t}\right)p_{\boldsymbol{\gamma}\left(\boldsymbol{r}\right)}\left(\boldsymbol{s}\right)p_{\boldsymbol{\beta}}\left(\boldsymbol{r}\right)  \\
&=\sum_{\xi,\psi,\phi}\sum_{r_1,r_2,s_1,s_2}p\left(\xi\right)p\left(\psi\right)p\left(\phi\right)p_{\delta_1\left(s_1\right)}\left(t_1\vert \beta_1,r_1, \phi\right)p_{\delta_2\left(r_2\right)}\left(t_2\vert \beta_2,r_2, \phi\right) \nonumber\\
& \hspace{9em} \times p_{\gamma_1\left(r_1\right)}\left(s_1\vert \psi\right)p_{\gamma_2\left(r_2\right)}\left(s_2\vert \psi\right)p_{\beta_1}\left(r_1\vert \xi\right)p_{\beta_2}\left(r_2\vert \xi\right)\\
&= \sum_{\boldsymbol{\psi}} p\left(\boldsymbol{\psi}\right)\left[ \sum_{r_1,s_1}p_{\delta_1\left(s_1\right)}\left(t_1\vert \beta_1,r_1, \phi\right)p_{\gamma_1\left(r_1\right)}\left(s_1\vert \psi\right)p_{\beta_1}\left(r_1\vert \xi\right)\right]\nonumber\\
& \hspace{9em} \times\left[\sum_{r_2,s_2}p_{\delta_2\left(r_2\right)}\left(t_2\vert \beta_2,r_2,\phi\right)p_{\gamma_2\left(r_2\right)}\left(s_2\vert \xi\right)p_{\beta_2}\left(r_2\vert \xi\right)\right]\\
&=\sum_{\boldsymbol{\psi}} p\left(\boldsymbol{\psi}\right)p_{\beta_1}\left(t_1\vert\boldsymbol{\psi}\right)p_{\beta_2}\left(t_2\vert\boldsymbol{\psi}\right).
  \end{align}
  where $\boldsymbol{\psi}=\left(\xi,\psi,\phi\right).$
 \end{proof}

This proof is connected to the fact that the composition of any three independent 
non-contextual boxes yields a final box that is also non-contextual (with three independent non-contextual hidden variables). $\mathsf{NCW}$ is however more powerful than such compositions because the pre- and post-processing boxes here are not independent. Still, the restriction of eq.\ \eqref{eq:post_rest} enables non-contextuality preservation (see \cite{ACTA18}).
For space-like separated measurements, $\mathsf{NCW}$ reduces to \emph{local operations assisted by shared randomness}, the canonical free operations of Bell non-locality \cite{GWAN12,Vicente14,GA17}. This also shows that non-contextual wirings is not the largest set of free operations for contextuality, since in the particular case of Bell scenarios it is known that local operations assisted by shared randomness is not the largest set of free operations of non-locality. However, we still lack an explicit parametrization for a larger set of free operations for contextuality, and we restrict throughout to the class of non-contextual wirings, unless stated otherwise, since this is the class for which we have a friendly parametrization with a clear physical interpretation.

\subsubsection*{Product and controlled choice of boxes}

We consider now two different ways of combining independent boxes $\boldsymbol{B}_1=\left(\mathcal{M}_1, \mathcal{C}_1, \mathcal{O}_1\right)$ and $\boldsymbol{B}_2=\left(\mathcal{M}_2, \mathcal{C}_2, \mathcal{O}_2\right)$. 
First we define the box $\boldsymbol{B}_1 \otimes \boldsymbol{B}_2$, called the \emph{product} of  $\boldsymbol{B}_1$ and $\boldsymbol{B}_2$, as the box such that each of  its  contexts  is given by  $\boldsymbol{\gamma}=\boldsymbol{\gamma}_1  \cup \boldsymbol{\gamma}_2$, with $\boldsymbol{\gamma}_i\in \mathcal{C}_i$, that is, each context in the final box $\boldsymbol{B}_1 \otimes \boldsymbol{B}_2$ consists of a choice of context for box $\boldsymbol{B}_1$ \emph{and} a choice of context for $\boldsymbol{B}_2$ . The behavior of this box is
\be p_{\boldsymbol{\gamma}_1\cup \boldsymbol{\gamma}_2}\left(\boldsymbol{s}_1,\boldsymbol{s}_2\right)=p_{\boldsymbol{\gamma}_1}\left(\boldsymbol{s}_1\right) p_{\boldsymbol{\gamma}_2}\left(\boldsymbol{s}_2\right).\ee

\begin{figure}[h!]
\centering
\includegraphics[scale=0.7]{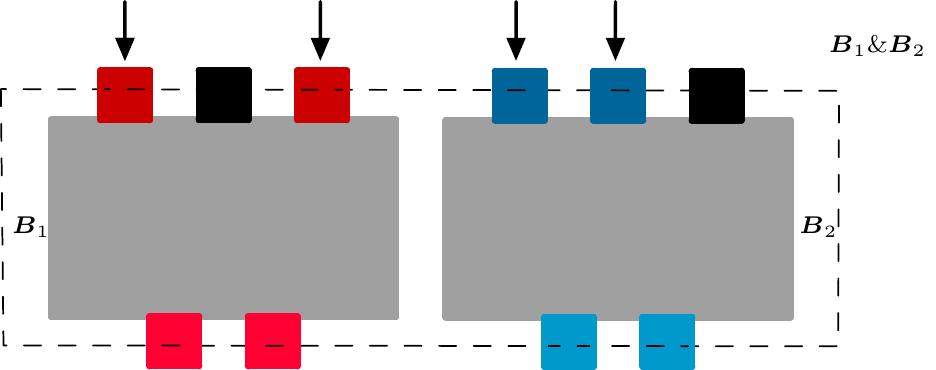}
 \caption{The box  $\boldsymbol{B}_1 \otimes \boldsymbol{B}_2$ for which each context  consists of a choice of context for box $\boldsymbol{B}_1$ \emph{and} a choice of context for $\boldsymbol{B}_2$.}
 \label{fig:and}
\end{figure}

We can also define the box $\boldsymbol{B}_1 \& \boldsymbol{B}_2$, called the \emph{controlled choice} of  $\boldsymbol{B}_1$ and $\boldsymbol{B}_2$,
as the box such that  $\mathcal{C}=\mathcal{C}_1  \cup \mathcal{C}_2$, that is, each context in the final box $\boldsymbol{B}_1 \& \boldsymbol{B}_2$ consists of a choice of context for box $\boldsymbol{B}_1$ \emph{or} a choice of context for $\boldsymbol{B}_2$. The behavior of this box is
 a juxtaposition of a behavior for box $\boldsymbol{B}_1$ and a behavior for $\boldsymbol{B}_2$.

\begin{figure}[h!]
\centering
\begin{subfigure}[b]{0.45\textwidth}
\includegraphics[scale=0.7]{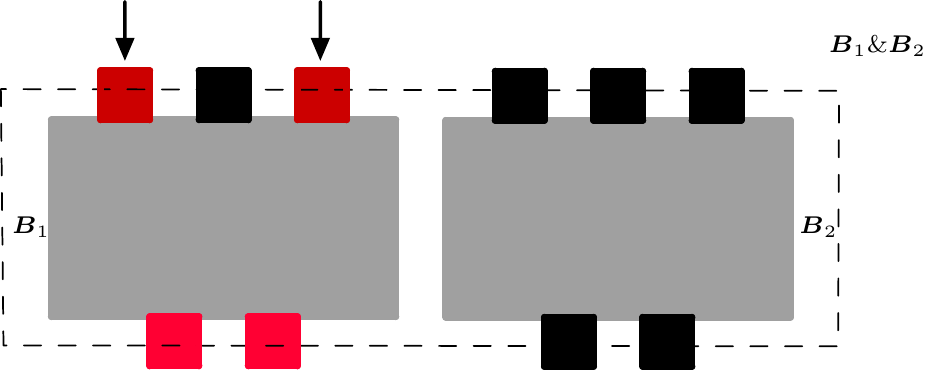}
\caption{}
\end{subfigure}
~ \qquad 
\begin{subfigure}[b]{0.45\textwidth}
\includegraphics[scale=0.7]{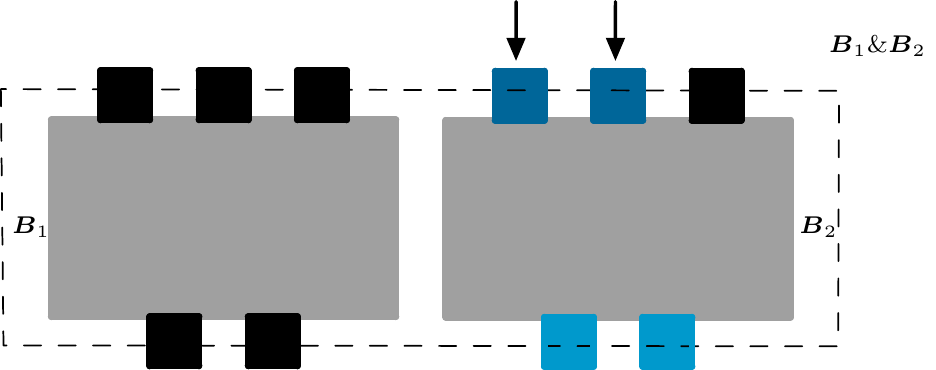}
\caption{}
\end{subfigure}
\caption{The box  $\boldsymbol{B}_1 \& \boldsymbol{B}_2$ for which each context  consists of a choice of context for box $\boldsymbol{B}_1$ \emph{or} a choice of context for $\boldsymbol{B}_2$.}
\label{fig:or} 
\end{figure}

 \section{Quantifiers}
 \label{sec:quant}
The essential requirement for a function to be a valid measure of contextuality is that it is monotonous (i.e. non-increasing) under the set of non-contextual wirings. 

\begin{dfn}
A function $Q: \mathsf{ND}\left(\Upsilon\right)\rightarrow \mathds{R}$ is a \emph{contextuality monotone} for the resource theory of contextuality defined by non-contextual wirings if
 \be Q\left[\mathcal{W}\left(B\right)\right] \leq Q\left(B\right)\ee
 for every $\mathcal{W} \in \mathsf{NWC}$.
 \end{dfn}

Besides monotonicity under free operations, other properties of a monotone $Q$ are also desirable \cite{HGJKL15, ABM17}:

\begin{enumerate}

\item  \emph{Faithfullness:} For all $\boldsymbol{B}\in \mathsf{NC}(\Upsilon)$, $Q\left(\boldsymbol{B}\right)=0$.

\item  \emph{Preservation under reversible operations:} If $\mathcal{T} \in \mathcal{W}$ is reversible, then
\be Q\left(\mathcal{T}\left(\boldsymbol{B}\right)\right) = Q\left(\boldsymbol{B}\right).\ee

\item  \emph{Additivity:} Given two independent boxes $\boldsymbol{B}_1$ and $\boldsymbol{B}_2$ we require: 
\be
Q\left(\boldsymbol{B}_1 \otimes \boldsymbol{B}_2\right)\leq Q\left(\boldsymbol{B}_1\right)+Q\left(\boldsymbol{B}_2\right).
\ee
\be
Q\left(\boldsymbol{B}_1 \& \boldsymbol{B}_2\right)\leq Q\left(\boldsymbol{B}_1\right)+Q\left(\boldsymbol{B}_2\right).
\ee

\item  \emph{Convexity:} If a behavior can be written as $\boldsymbol{B}=\sum_i \pi_i \boldsymbol{B}^i$, where $\pi_i \in [0,1]$ and each $\boldsymbol{B}^i$ is a behavior  for the same scenario, then 
\be Q\left(\boldsymbol{B}\right) \leq \sum_i \pi_i Q\left(\boldsymbol{B}^i\right).\ee

\item  \emph{Continuity:} $Q\left(\boldsymbol{B}\right)$ should be a continuous function of $\boldsymbol{B}$.

\end{enumerate}
 
In what follows we exhibit a number of monotones for different resource theories of contextuality and
 list which of the  properties above they satisfy.
 
 \subsection{Entropic Contextuality Quantifiers}

 \subsubsection{Relative entropy of contextuality}
 
  In ref. \cite{GHHHHJKW14}, the authors also introduce two measures of contextuality based directly on the notion of relative entropy distance, 
 also called the Kullback-Leibler divergence. Given two probability distributions $p$ and $q$ in a sample space $\Omega$, the 
 Kullback-Leiber divergence between $p$ and $q$ 
 \be D_{\mathrm{KL}}(p\|q) = \sum_{i\in \Omega} p(i) \, \log\frac{p(i)}{q(i)} \ee
 is a measure of  the difference between the two probability distributions.
 
 \begin{dfn}
 The \emph{Relative Entropy of 
 Contextuality} of a behavior $\boldsymbol{B}$ is defined as
 \begin{equation}\label{HorDist2}
E_{max}\left(\boldsymbol{B} \right) =   \min _{\boldsymbol{B}^{NC} \in \mathsf{NC} }\ \ \max _{\pi}\ \ 
\sum _{\boldsymbol{\gamma}\in \mathcal{C}}\ \pi\left(\boldsymbol{\gamma}\right) D_{\mathrm{KL}}\left(p_{\boldsymbol{\gamma}}  \middle\| p^{NC}_{\boldsymbol{\gamma}} \right),
\end{equation}
where the minimum is taken over all non-contextual behaviors $\boldsymbol{B}^{NC}=\left\{ p^{NC}_{\boldsymbol{\gamma}} \right\}$ and the maximum is taken over all
 probability distributions $\pi$ defined on the set of contexts $\mathcal{C}$.
The \emph{Uniform Relative Entropy of Contextuality} of $\boldsymbol{B}$
is defined as
\begin{equation}\label{HorDist3}
E_{u}\left(\boldsymbol{B} \right) = \frac{1}{N} \min _{\boldsymbol{B}^{NC}\in \mathsf{NC} } \ \ \sum _{\boldsymbol{\gamma}\in \mathcal{C}}\ D_{\mathrm{KL}}\left(p_{\boldsymbol{\gamma}}  \middle\| p^{NC}_{\boldsymbol{\gamma}} \right),
\end{equation}
 where $N=\left|\mathcal{C}\right|$ is the number of contexts in $\mathcal{C}$ and, once more, the minimum is taken over all non-contextual behaviors $\boldsymbol{B}^{NC}=\left\{ p^{NC}_{\boldsymbol{\gamma}} \right\}$.
 \end{dfn}

In reference \cite{ACTA18} it is shown that $E_{max}$ is a monotone under \emph{non-contextual wirings}. The quantity $E_{u}$, however, is not a monotone under the complete class of 
non-contextual wirings, as shown in Ref. \cite{GA17} for the special class of 
Bell scenarios. Nonetheless, it is a monotone under a broad class of such operations. More specifically, it is monotone under
post-processing operations and under a subclass of pre-processing operations (see ref. \cite{AT17}).

\begin{thm}
\label{teo:prop_entropic}
The following properties are valid for the contextuality quantifiers based on relative entropy:
\begin{enumerate}
\item $E_{max}$ is a contextuality monotone for the resource theory of contextuality defined by  non-contextual wirings;
\item $E_u$ is a contextuality monotone for the resource theory of contextuality defined by post-processing operations and a subclass of pre-processing operations;
\item $E_{max}$ and $E_u$ are faithful, additive, convex, continuous, and preserved under relabellings of inputs and outputs.
\end{enumerate}

\end{thm}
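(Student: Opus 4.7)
The proof naturally splits along the three items of the statement, and the common engine throughout is the classical behavior of the Kullback--Leibler divergence: nonnegativity with $D_{\mathrm{KL}}(p\|p)=0$, joint convexity in its two arguments, invariance under bijections of the sample space, additivity on product distributions $D_{\mathrm{KL}}(p_1\otimes p_2\|q_1\otimes q_2)=D_{\mathrm{KL}}(p_1\|q_1)+D_{\mathrm{KL}}(p_2\|q_2)$, and the data-processing inequality under any stochastic map. Combined with Theorem \ref{teoncpreservation}, which guarantees that any $\mathcal{W}\in\mathsf{NCW}$ maps $\mathsf{NC}$ into $\mathsf{NC}$, these ingredients suffice. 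The plan is to first establish (1), deduce (2) by specializing the argument, and finally check (3) property by property, referring to the explicit forms of $\boldsymbol{B}_1\otimes\boldsymbol{B}_2$ and $\boldsymbol{B}_1\mathop{\&}\boldsymbol{B}_2$ given in the previous section.

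For item (1), I would decompose a wiring $\mathcal{W}$ into its pre- and post-processing stages. For the pre-processing, the output distribution is a convex combination $p_{\boldsymbol{\beta}}(\boldsymbol{s})=\sum_{\boldsymbol{r}} p_{\boldsymbol{\beta}}(\boldsymbol{r})\,p_{\boldsymbol{\gamma}(\boldsymbol{r})}(\boldsymbol{s})$, and joint convexity of $D_{\mathrm{KL}}$ applied to this convex combination (using the \emph{same} weights $p_{\boldsymbol{\beta}}(\boldsymbol{r})$ for the reference behavior, which is legitimate because $\boldsymbol{B}_{\mathrm{PRE}}$ is shared between $\boldsymbol{B}$ and $\boldsymbol{B}^{NC}$) yields $D_{\mathrm{KL}}(\tilde p_{\boldsymbol{\beta}}\|\tilde p_{\boldsymbol{\beta}}^{NC})\leq \sum_{\boldsymbol{r}}p_{\boldsymbol{\beta}}(\boldsymbol{r})\,D_{\mathrm{KL}}(p_{\boldsymbol{\gamma}(\boldsymbol{r})}\|p_{\boldsymbol{\gamma}(\boldsymbol{r})}^{NC})$. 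Summing against any $\pi(\boldsymbol{\beta})$, I would collect terms by $\boldsymbol{\gamma}$ and identify $\pi'(\boldsymbol{\gamma}):=\sum_{\boldsymbol{\beta},\boldsymbol{r}:\boldsymbol{\gamma}(\boldsymbol{r})=\boldsymbol{\gamma}}\pi(\boldsymbol{\beta})\,p_{\boldsymbol{\beta}}(\boldsymbol{r})$ as a genuine probability distribution on $\mathcal{C}$ (its total mass equals $\sum_{\boldsymbol{\beta}}\pi(\boldsymbol{\beta})=1$). The post-processing stage is a stochastic map with kernel $p_{\boldsymbol{\delta}(\boldsymbol{s})}(\boldsymbol{t}\,|\,\boldsymbol{\beta},\boldsymbol{r})$ that is independent of whether it is applied to $\boldsymbol{B}$ or to $\boldsymbol{B}^{NC}$, so DPI gives termwise monotonicity; the auxiliary variable $\phi$ of \eqref{eq:post_rest} is absorbed by joint convexity. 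Chaining both bounds, taking maximum over $\pi$ on one side and noting that this max is bounded by a max over $\pi'$ on the other, and finally taking the infimum over $\boldsymbol{B}^{NC}\in\mathsf{NC}$ (feasible for $\mathcal{W}(\boldsymbol{B})$ thanks to Theorem \ref{teoncpreservation}) yields $E_{max}(\mathcal{W}(\boldsymbol{B}))\leq E_{max}(\boldsymbol{B})$. The main obstacle will be this bookkeeping between $\pi$ and the induced $\pi'$, since it is precisely the step where the flexibility of the max inside $E_{max}$ is used; it is also what fails for $E_u$, because the uniform weight $1/N$ in front of the sum cannot be reconciled with the non-uniform induced weights $\pi'$.

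For item (2), under post-processing alone the set of contexts is unchanged, so $N$ is preserved and DPI gives monotonicity termwise with no rebalancing needed. Under pre-processings that permute or coarse-grain contexts while preserving the uniform weighting up to a factor of $|\mathcal{C}_{\mathrm{PRE}}|/|\mathcal{C}|$, the same rebalancing argument goes through; outside this subclass, the known Bell counterexamples of \cite{GA17} show that monotonicity can be violated, so I would simply define that subclass (as in \cite{AT17}) and quote the adapted version of the argument above.

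For item (3), faithfulness is immediate: for $\boldsymbol{B}\in\mathsf{NC}$ the choice $\boldsymbol{B}^{NC}=\boldsymbol{B}$ is feasible and kills every KL term. Additivity under $\otimes$ follows from the factorization of $D_{\mathrm{KL}}$ on product distributions combined with the fact that the maximum of a linear functional on a simplex is attained at a vertex, which lets the joint max over $\pi$ on $\mathcal{C}_1\times\mathcal{C}_2$ split into separate maxima over $\mathcal{C}_1$ and $\mathcal{C}_2$; for $\mathop{\&}$, the contexts partition as $\mathcal{C}_1\sqcup\mathcal{C}_2$ and the same vertex argument bounds the maximum by a sum of per-factor maxima, while a juxtaposed non-contextual reference is admissible. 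Convexity is obtained by feeding the convex combination $\boldsymbol{B}^{NC}=\sum_i\pi_i \boldsymbol{B}^{NC,i}$ (of individual minimizers, itself non-contextual) into $E_{max}$ and applying joint convexity of $D_{\mathrm{KL}}$ termwise inside the max. Continuity reduces to continuity of $D_{\mathrm{KL}}$ on its natural domain together with Berge's maximum theorem applied to the compact polytope of non-contextual behaviors, with mild care near the boundary where supports shrink. Invariance under relabellings is immediate because $D_{\mathrm{KL}}$ depends only on the induced pair of probability measures, not on the labelling of inputs and outputs. Parts (3) for $E_u$ are identical after trivially adjusting the $1/N$ factor, with additivity taking a slightly weaker form reflecting that $|\mathcal{C}|$ need not be preserved under combinations.
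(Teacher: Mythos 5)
The paper does not actually prove Theorem~\ref{teo:prop_entropic} in the text; it only points to the cited references, so there is no in-paper argument to compare against line by line. That said, your sketch reconstructs what is essentially the standard proof from those references: joint convexity of $D_{\mathrm{KL}}$ handles the pre-processing stage, the data-processing inequality handles the post-processing stage (with the auxiliary variable $\phi$ of eq.~\eqref{eq:post_rest} absorbed by convexity), Theorem~\ref{teoncpreservation} guarantees that $\mathcal{W}(\boldsymbol{B}^{NC})$ is an admissible reference for $\mathcal{W}(\boldsymbol{B})$, and the reweighting $\pi\mapsto\pi'$ on contexts is exactly the step that makes $E_{max}$ survive general pre-processings while $E_u$ does not. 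Your diagnosis of why the uniform weighting breaks for $E_u$, and the reduction of item (2) to post-processings plus a restricted pre-processing subclass, matches the known counterexamples in Bell scenarios. The property-by-property treatment of item (3) is also sound: restricting the minimization to product (respectively juxtaposed) non-contextual references is enough for the subadditivity inequalities as stated, and the vertex argument correctly splits the max over $\pi$.

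Two spots are thinner than the rest, though both are acknowledged and neither is a wrong turn. First, continuity: $D_{\mathrm{KL}}(p\|q)$ is not continuous where the support of $q$ shrinks, so Berge's theorem cannot be applied directly on the full non-contextual polytope; one must first argue that the minimum is unchanged when restricted to full-support references (or work with an $\epsilon$-interior and pass to the limit). Second, the ``subclass of pre-processing operations'' in item (2) is left unspecified; since the statement of the theorem is only as strong as that subclass, a self-contained proof would need to define it explicitly rather than quote \cite{AT17}. Neither issue affects the correctness of the overall strategy.
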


 The proof of this result can be found in refs. \cite{GHHHHJKW14,HGJKL15,AT17}.
 
 \subsection{Geometric Contextuality Quantifiers}
 \label{sectiongeometric}

We now introduce contextuality monotones  based on the distance $\ell_1$, in contrast 
with the previous defined quantifiers which are based on entropic distances.

\begin{dfn}
The $\ell_1$-\emph{max contextuality distance} of a behavior $\boldsymbol{B}$ is defined as
\be \mathcal{D}_{max}\left(\boldsymbol{B}\right)= \min _{\boldsymbol{B}^{NC}\in \mathsf{NC} } \max_{\pi} \sum _{\boldsymbol{\gamma}\in \mathcal{C}}\ \pi\left(\boldsymbol{\gamma}\right)\ \sum_{\boldsymbol{s}}\left|p_{\boldsymbol{\gamma}}\left(\boldsymbol{s}\right)-p^{NC}_{\boldsymbol{\gamma}}\left(\boldsymbol{s}\right) \right|,  \label{eqdefdist3}\ee
where the minimum is taken over all non-contextual behaviors $\boldsymbol{B}^{NC}=\left\{ p^{NC}_{\boldsymbol{\gamma}} \right\}$ and the maximum is taken over all
over all probability distributions $\pi$ defined over the set of contexts $\mathcal{C}$.
The $\ell_1$-\emph{uniform contextuality distance} of a behavior $B$ is defined as
\be \mathcal{D}_u\left(\boldsymbol{B}\right)=\frac{1}{N}\min_{\boldsymbol{B}^{NC}\in \mathsf{NC} } \sum_{\boldsymbol{\gamma}\in \mathcal{C}} \sum_{\boldsymbol{s}}\left|p_{\boldsymbol{\gamma}}\left(\boldsymbol{s}\right)-p^{NC}_{\boldsymbol{\gamma}}\left(\boldsymbol{s}\right) \right|, \label{eqdefdist2}\ee
where $N=\left|\mathcal{C}\right|$ is the number of contexts in $\mathcal{C}$.
\end{dfn}

For detailed discussion of these contextuality quantifiers see ref. \cite{AT17} and, for the special class of
Bell scenarios, ref. \cite{BAC18}).

\begin{thm}
\label{teo_dist}
The following properties are satisfied:
\begin{enumerate}
\item $\mathcal{D}_{max}$ is a contextuality monotone for the resource theory of contextuality defined by the non-contextual wiring operations;
\item $\mathcal{D}_u$ is a contextuality monotone for the resource theory of contextuality defined by post-processing operations and a subclass of pre-processing operations;
\item $\mathcal{D}_{u}$ and $\mathcal{D}_{max}$ are faithful, additive, convex, continuous, and preserved under relabellings of inputs and outputs.
\item $\mathcal{D}_u$ can be computed using linear programming.
\end{enumerate}
\end{thm}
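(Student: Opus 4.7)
The plan is to mirror the structure used in the proofs for $E_{max}$ and $E_u$ in the cited references, exploiting that the $\ell_1$ norm, like the Kullback--Leibler divergence, is contractive under stochastic maps. For item (1), I would start from the observation that any $\mathcal{W} \in \mathsf{NCW}$ acts as a convex combination of coordinate-wise stochastic maps on the vector of conditional distributions, so that for each fixed context $\boldsymbol{\beta}$ the induced map $p_{\boldsymbol{\gamma}(\cdot)} \mapsto p_{\boldsymbol{\beta}}$ is an affine stochastic transformation. Given any $\boldsymbol{B}^{NC} \in \mathsf{NC}$ achieving the minimum for $\boldsymbol{B}$, Theorem \ref{teoncpreservation} guarantees $\mathcal{W}(\boldsymbol{B}^{NC}) \in \mathsf{NC}$, so
\begin{equation}
\mathcal{D}_{max}(\mathcal{W}(\boldsymbol{B})) \le \max_{\pi'} \sum_{\boldsymbol{\beta}} \pi'(\boldsymbol{\beta}) \sum_{\boldsymbol{t}} \bigl| [\mathcal{W}(\boldsymbol{B})]_{\boldsymbol{\beta}}(\boldsymbol{t}) - [\mathcal{W}(\boldsymbol{B}^{NC})]_{\boldsymbol{\beta}}(\boldsymbol{t}) \bigr|.
\end{equation}
The right-hand side can be pushed back to the input side via the triangle inequality applied to the explicit sum over $\boldsymbol{r}, \boldsymbol{s}$ in the wiring formula, together with the fact that $\pi'$ on $\mathcal{C}_{\mathrm{PRE}}$ induces, through the pre-processing distribution $p_{\boldsymbol{\beta}}(\boldsymbol{r})$, a probability distribution $\pi$ on $\mathcal{C}$ that is feasible in the definition of $\mathcal{D}_{max}(\boldsymbol{B})$. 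This yields $\mathcal{D}_{max}(\mathcal{W}(\boldsymbol{B})) \le \mathcal{D}_{max}(\boldsymbol{B})$.

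For item (2), the absence of the max over $\pi$ in $\mathcal{D}_u$ is precisely why full wiring monotonicity fails, as in the Bell case of \cite{GA17}. For post-processing, the same contractivity argument works directly, because a post-processing does not change the set of contexts, so the uniform weighting $1/N$ is preserved. For the pre-processing subclass identified in \cite{AT17}, the point is that each context of the output box arises from a single context of $\boldsymbol{B}$ (or more generally from a mixture that does not inflate the uniform average), so that the induced $\pi$ on $\mathcal{C}$ remains uniform or sub-uniform; I would simply quote this from \cite{AT17}.

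Item (3) I would handle in a short sequence. Faithfulness follows from $\boldsymbol{B} \in \mathsf{NC} \Leftrightarrow \mathcal{D}(\boldsymbol{B}) = 0$, since $\mathsf{NC}$ is a closed polytope and $\ell_1$ is a genuine distance. Convexity is inherited from the joint convexity of $(p, p^{NC}) \mapsto \sum |p-p^{NC}|$ together with minimisation over the convex set $\mathsf{NC}$. Continuity follows because the objective is continuous in both arguments and $\mathsf{NC}$ is compact, so the minimiser depends continuously on $\boldsymbol{B}$. Invariance under relabelling of inputs and outputs follows from the fact that such relabellings are bijective permutations, so they preserve both the $\ell_1$ norm and the set $\mathsf{NC}$. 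For additivity I would take optimisers $\boldsymbol{B}^{NC}_1, \boldsymbol{B}^{NC}_2$ for $\boldsymbol{B}_1, \boldsymbol{B}_2$ and use them as (non-optimal) feasible points for $\boldsymbol{B}_1 \otimes \boldsymbol{B}_2$ and $\boldsymbol{B}_1 \,\&\, \boldsymbol{B}_2$; for the product case the identity
\begin{equation}
\bigl| p_1 p_2 - q_1 q_2 \bigr| \le p_1 |p_2 - q_2| + q_2 |p_1 - q_1|,
\end{equation}
summed over outcomes and combined with normalisation, gives the desired $Q(\boldsymbol{B}_1 \otimes \boldsymbol{B}_2) \le Q(\boldsymbol{B}_1) + Q(\boldsymbol{B}_2)$; for the controlled choice the behavior is a juxtaposition, so the sum of $\ell_1$ contributions splits cleanly over $\mathcal{C}_1 \cup \mathcal{C}_2$.

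Finally, for item (4) I would argue that $\mathcal{D}_u$ is by definition a minimisation of a sum of absolute values of linear functions of the unknowns $p^{NC}_{\boldsymbol{\gamma}}(\boldsymbol{s})$, subject to the linear constraints that $p^{NC} \in \mathsf{NC}$, which is itself a polytope cut out by linear equalities (marginalisation from a global section) and positivity inequalities. Introducing slack variables $u_{\boldsymbol{\gamma},\boldsymbol{s}} \ge \pm(p_{\boldsymbol{\gamma}}(\boldsymbol{s}) - p^{NC}_{\boldsymbol{\gamma}}(\boldsymbol{s}))$ linearises the absolute values, yielding an explicit linear program whose size is polynomial in the size of $\Upsilon$. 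The main technical obstacle in the whole proof is the monotonicity part (1): the contractivity inequality for $\ell_1$ must be combined carefully with the specific form of the wiring in Eq. \eqref{eq:post_rest} and with the observation that optimal weights $\pi'$ on $\mathcal{C}_{\mathrm{PRE}}$ pull back to admissible weights $\pi$ on $\mathcal{C}$; everything else is essentially linear algebra and standard convex analysis, or a citation to \cite{AT17,BAC18}.
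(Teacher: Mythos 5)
Your proposal is correct and follows essentially the same route as the proofs the paper defers to in \cite{AT17,BAC18}: contractivity of the $\ell_1$ distance under the stochastic maps induced by a wiring, combined with non-contextuality preservation (Theorem \ref{teoncpreservation}) and the pull-back of the optimal weights $\pi'$ on $\mathcal{C}_{\mathrm{PRE}}$ to feasible weights $\pi$ on $\mathcal{C}$, plus feasible-point arguments for convexity and additivity and the standard slack-variable linearisation for the linear program. The only inaccuracy is your parenthetical claim that the resulting LP has size polynomial in the scenario: the non-contextual polytope is parametrised by a global section on $\mathcal{O}^{\mathcal{M}}$, so the LP involves $|\mathcal{O}|^{|\mathcal{M}|}$ variables, exponential in $|\mathcal{M}|$; this does not affect item (4) as stated, which only asserts computability via linear programming.
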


This result is proven in refs. \cite{AT17,BAC18}. It shows that while $\mathcal{D}_{max}$ is a proper 
contextuality monotone under the entire  class of non-contextual wirings,
 $D_u$ are more suitable when the set of allowed free operations preserves the scenario under consideration.  Other distances defined in the set $\mathsf{ND}$ can also be used in place of the $\ell_1$ distance. The above results are also valid for any $\ell_p$ distance.

\subsection{Contextual Fraction}
\label{subsec:cf}

A contextuality quantifier based on the intuitive notion of what \emph{fraction} of a given behavior admits a non-contextual description
was introduced in refs. \cite{AB11, ADLPBC12}. Several properties of this quantifier were further discussed in Ref. \cite{ABM17}.

\begin{dfn}
The \emph{contextual fraction} of a behavior $\boldsymbol{B}$ is defined as
\be
\label{eq:cont_frac}
\mathcal{CF}\left(\boldsymbol{B}\right)= \min \left\{\lambda \left|\boldsymbol{B}=  \lambda \boldsymbol{B}' + \left(1-\lambda\right)\boldsymbol{B}^{NC}\right.\right\},
\ee
where the minimum is taken over all decompositions of $\boldsymbol{B}$ as a convex sum of a non-contextual behavior  $\boldsymbol{B}^{NC}$ and  an arbitrary  behavior $\boldsymbol{B}'$.
\end{dfn}

\begin{thm}
 The contextual fraction is a monotone under all linear operations that preserve the non-contextual set $\mathsf{NC}$.
\end{thm}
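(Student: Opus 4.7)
The plan is to establish monotonicity directly from the convex-geometric definition of $\mathcal{CF}$ by pushing an optimal decomposition of $\boldsymbol{B}$ forward through the linear map $\mathcal{W}$. Since linearity preserves convex combinations and $\mathcal{W}$ preserves $\mathsf{NC}$ by hypothesis, the pushed-forward decomposition will automatically be a feasible decomposition of $\mathcal{W}(\boldsymbol{B})$ of the same form and with the same weight $\lambda$, so the minimum defining $\mathcal{CF}(\mathcal{W}(\boldsymbol{B}))$ can only be smaller.

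Concretely, I would proceed as follows. Let $\lambda = \mathcal{CF}(\boldsymbol{B})$ and fix an optimal decomposition $\boldsymbol{B} = \lambda \boldsymbol{B}' + (1-\lambda) \boldsymbol{B}^{NC}$ with $\boldsymbol{B}^{NC} \in \mathsf{NC}$ and $\boldsymbol{B}'$ an arbitrary (non-disturbing) behavior. Apply $\mathcal{W}$ and use linearity to obtain
\begin{equation}
\mathcal{W}(\boldsymbol{B}) \;=\; \lambda\, \mathcal{W}(\boldsymbol{B}') \;+\; (1-\lambda)\, \mathcal{W}(\boldsymbol{B}^{NC}).
\end{equation}
The assumption that $\mathcal{W}$ preserves the non-contextual set gives $\mathcal{W}(\boldsymbol{B}^{NC}) \in \mathsf{NC}$, and $\mathcal{W}(\boldsymbol{B}')$ is a valid behavior because $\mathcal{W}$ maps the object space into itself. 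Hence the right-hand side is a valid convex decomposition of $\mathcal{W}(\boldsymbol{B})$ of precisely the form appearing in \eqref{eq:cont_frac}, witnessing $\mathcal{CF}(\mathcal{W}(\boldsymbol{B})) \leq \lambda = \mathcal{CF}(\boldsymbol{B})$.

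Two points deserve brief attention. First, one must check that the minimum in \eqref{eq:cont_frac} is actually attained so that an \emph{optimal} $\lambda$, $\boldsymbol{B}'$, and $\boldsymbol{B}^{NC}$ exist; this follows from compactness of the set of behaviors and closedness of $\mathsf{NC}$ in any finite compatibility scenario, but the argument can also be run with any $\lambda > \mathcal{CF}(\boldsymbol{B})$ and concluded by taking the infimum, avoiding the attainment issue entirely. Second, one should verify that $\mathcal{W}(\boldsymbol{B}')$ lies in the domain of $\mathcal{CF}$ (i.e.\ is a legitimate behavior/box on the output scenario); this is immediate from the convention that $\mathcal{W}$ maps $\mathsf{ND}(\Upsilon) \to \mathsf{ND}(\Upsilon')$, as in Lemma~\ref{Lem:ND} for $\mathsf{NCW}$.

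The main obstacle here is essentially absent: once one unpacks the variational definition, monotonicity reduces to the single line "linear maps preserve convex combinations and, by hypothesis, preserve the free set." The only nontrivial conceptual ingredient is the observation that $\mathcal{CF}$ is structurally a best-decomposition measure (analogous to the generalized robustness or the non-local fraction), and such quantifiers are \emph{automatically} monotones under any linear transformation that preserves the underlying free set. No convexity of $\mathcal{W}$ beyond linearity is required, nor any further structure such as the $\mathsf{NCW}$ parametrization.
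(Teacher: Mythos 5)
Your proposal is correct and follows essentially the same route as the paper's proof: take the optimal decomposition achieving $\mathcal{CF}(\boldsymbol{B})=\lambda$, push it through the linear map, and use preservation of $\mathsf{NC}$ to obtain a feasible decomposition of the image with the same weight. Your additional remarks on attainment of the minimum and on the image behavior lying in the correct object space are sound refinements that the paper leaves implicit.
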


\begin{proof}Let $\mathcal{T}$ be a linear operation over the set of behaviors such that 
\be
\mathcal{T}\left(\mathsf{NC}\right) \subset \mathsf{NC}.
\ee
Given a behavior $\boldsymbol{B}$, let $\boldsymbol{B}= \lambda \boldsymbol{B}' + \left(1-\lambda\right)\boldsymbol{B}^{NC}$ be the decomposition of $\boldsymbol{B}$ achieving the 
minimum in eq. \eqref{eq:cont_frac}, that is, $\mathcal{CF}\left(\boldsymbol{B}\right)=\lambda$. Then
\begin{eqnarray}
 \mathcal{T}\left(\boldsymbol{B}\right) &=&\mathcal{T}\left(\lambda \boldsymbol{B}' + \left(1-\lambda\right)\boldsymbol{B}^{NC}\right)\\
 &=&\lambda\mathcal{T}\left( \boldsymbol{B}' \right) + \left(1-\lambda\right)\mathcal{T}\left(\boldsymbol{B}^{NC}\right).
\end{eqnarray}
Since $\mathcal{T}\left(\boldsymbol{B}^{NC}\right)$ is a non-contextual behavior, we conclude that
\be
\mathcal{CF}\left(\mathcal{T}\left(\boldsymbol{B}\right)\right) \leq \lambda= \mathcal{CF}\left(\boldsymbol{B}\right).
\ee
\end{proof}

\begin{prop}
 \label{teo:cf}
 The contextual fraction  satisfies:
 \begin{enumerate}
  \item The contextual fraction is faithful, convex and continuous;
  \item $\mathcal{CF}\left(\boldsymbol{B}_1 \& \boldsymbol{B}_2\right) \leq \max_i \mathcal{CF}\left(\boldsymbol{B}_i\right)$;
  \item $\mathcal{CF}\left(\boldsymbol{B}_1 \otimes  \boldsymbol{B}_2\right) \leq \mathcal{CF}\left(\boldsymbol{B}_1\right) + \mathcal{CF}\left(\boldsymbol{B}_2\right) - \mathcal{CF}\left(\boldsymbol{B}_1\right)\mathcal{CF}\left(\boldsymbol{B}_2\right)$;
 \item The contextual fraction can be calculated via linear programming.
 \end{enumerate}

\end{prop}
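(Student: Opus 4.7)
The plan is to handle the four items separately by direct manipulation of the defining convex decomposition in eq.~\eqref{eq:cont_frac}, since that definition is intrinsically linear in $\boldsymbol{B}$. Faithfulness is immediate in both directions: if $\boldsymbol{B}\in\mathsf{NC}$ then $\lambda=0$, $\boldsymbol{B}^{NC}=\boldsymbol{B}$ is admissible, and conversely $\lambda=0$ forces $\boldsymbol{B}=\boldsymbol{B}^{NC}\in\mathsf{NC}$. For convexity, given $\boldsymbol{B}=\sum_i\pi_i\boldsymbol{B}^i$ and optimal decompositions $\boldsymbol{B}^i=\lambda_i(\boldsymbol{B}^i)'+(1-\lambda_i)(\boldsymbol{B}^i)^{NC}$ with $\lambda_i=\mathcal{CF}(\boldsymbol{B}^i)$, I will regroup as $\boldsymbol{B}=\sum_i\pi_i\lambda_i(\boldsymbol{B}^i)'+\sum_i\pi_i(1-\lambda_i)(\boldsymbol{B}^i)^{NC}$; after normalization the second sum lies in $\mathsf{NC}$ by convexity of the non-contextual set, which yields $\mathcal{CF}(\boldsymbol{B})\leq\sum_i\pi_i\lambda_i$. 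Continuity will be deferred until after item~4, as it follows from standard continuity of the optimal value of a linear program in its data over a compact feasible region.

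For item~2, the key observation is that any decomposition can be re-expressed with a larger contextual weight by mixing $\boldsymbol{B}'$ with any non-contextual behavior, so I can inflate $\lambda_i=\mathcal{CF}(\boldsymbol{B}_i)$ up to the common value $\lambda=\max_i\lambda_i$ in each of the two decompositions. Setting $\boldsymbol{B}'=\boldsymbol{B}_1'\,\&\,\boldsymbol{B}_2'$ and $\boldsymbol{B}^{NC}=\boldsymbol{B}_1^{NC}\,\&\,\boldsymbol{B}_2^{NC}$, I will verify context by context that $\lambda\boldsymbol{B}'+(1-\lambda)\boldsymbol{B}^{NC}=\boldsymbol{B}_1\,\&\,\boldsymbol{B}_2$, using that in the controlled choice the behavior on a context $\boldsymbol{\gamma}_i\in\mathcal{C}_i$ is exactly the behavior of $\boldsymbol{B}_i$ on $\boldsymbol{\gamma}_i$. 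The only nontrivial check is that $\boldsymbol{B}_1^{NC}\,\&\,\boldsymbol{B}_2^{NC}\in\mathsf{NC}$, which follows by taking the product of the two individual global sections on $\mathcal{M}_1\cup\mathcal{M}_2$ and marginalizing.

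For item~3, I will expand $p^{\boldsymbol{B}_1\otimes\boldsymbol{B}_2}_{\boldsymbol{\gamma}_1\cup\boldsymbol{\gamma}_2}(\boldsymbol{s}_1,\boldsymbol{s}_2)=p^{\boldsymbol{B}_1}_{\boldsymbol{\gamma}_1}(\boldsymbol{s}_1)\,p^{\boldsymbol{B}_2}_{\boldsymbol{\gamma}_2}(\boldsymbol{s}_2)$ through the two optimal decompositions, producing four terms. The purely non-contextual term $(1-\lambda_1)(1-\lambda_2)\,\boldsymbol{B}_1^{NC}\otimes\boldsymbol{B}_2^{NC}$ is non-contextual by the same product-of-global-sections argument used for item~2, while the remaining three cross terms, of combined weight $\lambda_1+\lambda_2-\lambda_1\lambda_2$, form a valid behavior after renormalization; this directly yields the claimed bound. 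For item~4 I will use the vertex form of the non-contextual polytope: $\boldsymbol{B}^{NC}\in\mathsf{NC}$ iff $p^{NC}_{\boldsymbol{\gamma}}(\boldsymbol{s})=\sum_{\boldsymbol{\sigma}\in\mathcal{O}^{\mathcal{M}}:\,\boldsymbol{\sigma}|_{\boldsymbol{\gamma}}=\boldsymbol{s}}\mu(\boldsymbol{\sigma})$ for some probability distribution $\mu$ on deterministic global assignments. Substituting the unnormalized weights $\nu(\boldsymbol{\sigma})=(1-\lambda)\mu(\boldsymbol{\sigma})$ into the defining constraint $\boldsymbol{B}=\lambda\boldsymbol{B}'+(1-\lambda)\boldsymbol{B}^{NC}$ with $\boldsymbol{B}'\geq 0$ turns the problem into the linear program: maximize $\sum_{\boldsymbol{\sigma}}\nu(\boldsymbol{\sigma})$ subject to $\sum_{\boldsymbol{\sigma}|_{\boldsymbol{\gamma}}=\boldsymbol{s}}\nu(\boldsymbol{\sigma})\leq p_{\boldsymbol{\gamma}}(\boldsymbol{s})$ for every $\boldsymbol{\gamma}\in\mathcal{C}$ and $\boldsymbol{s}\in\mathcal{O}^{\boldsymbol{\gamma}}$, and $\nu(\boldsymbol{\sigma})\geq 0$; then $\mathcal{CF}(\boldsymbol{B})=1-\max\sum_{\boldsymbol{\sigma}}\nu(\boldsymbol{\sigma})$, and continuity of $\mathcal{CF}$ in $\boldsymbol{B}$ follows from continuous dependence of LP optima on the right-hand-side data.

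The main obstacle is the bookkeeping in item~3, where the product structure forces one to track four terms with distinct weights and verify that the three cross terms can be normalized into a single behavior; everything else reduces to the closure of $\mathsf{NC}$ under convex combinations, controlled choice, and products, which are all simple consequences of combining global sections appropriately.
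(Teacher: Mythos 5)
Your proposal is correct and takes essentially the same route as the proof the paper points to (Ref.~\cite{ABM17}): faithfulness, convexity, and the two composition bounds follow from direct manipulation of the defining convex decomposition together with closure of $\mathsf{NC}$ under products and mixtures of global sections, and computability plus continuity come from the linear program over subnormalized weights on deterministic global assignments, exactly as in that reference. The paper itself only states the proposition and defers the proof, so there is nothing further to reconcile.
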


The proof of these results can be found in Ref. \cite{ABM17}.

\section{Contextuality as a resource}
\label{sec:app}

In this contribution we have defined the set of free objects using an abstract mathematical characterization of noncontextuality, but a resource theory of contextuality will exhibit its true power when applied to operational applications of this phenomenon. Contextuality has been identifyed as a possible resource for quantum advantages in different schemes and in this section we review some of the recent results.

\subsubsection*{Contextuality and random number generation}

The  generation  of  genuine  randomness is still a challenging task as true random  numbers  can  never  be  generated   with classical systems, for which   a  deterministic description, in principle,  always exists. For quantum system that exhibit  contextuality, a deterministic description that is independent on the choice of measurement settings is impossible, thus opening the door for the generation of genuine random numbers. This was indeed achieved in refs. \cite{PAMGMMOHLMM10,UZZWYDDK13} where the violation of  Bell and non-contextuality inequalities where used directly to compute a lower bound on the min-entropy of the outcomes, thus guaranteeing randomness of the string of output bits. This string can then be processed using classical algorithms to distill genuine random numbers. It is interesting to note that the quantum system in ref. \cite{UZZWYDDK13} is a qutrit, which shows that randomness can be generated without the need of using costly quantum resources such as entanglement. This  allows for easier implementation and significantly higher generation rate of random strings.

\subsubsection*{Contextuality and models of quantum computation with state injection}

Quantum computation with state injection (QCSI) \cite{BK05} is a scheme composed of a free part consisting of  quantum  circuits with restricted set of states, unitaries and measurements (generally  restricted to be that of  the stabilizer formalism) in which quantum computation universality is achieved by the injection  of   special  resource  states, called \emph{magic states}. These special states   are usually  distilled  from  many  copies  of  noisy  states  through a procedure called \emph{magic  state  distillation}.

To understand the source of quantum advantage in these schemes we need to understand what is precisely the quantum property that allows for magic state distillation. In ref. \cite{VFGE12}, it was shown that for a special choice of Wigner function representation of qudits with prime $d$, its positivity  implied the existence of an efficient classical simulation of the state, which in turn implied that this state is not useful for magic state distillation. This result was later explored in ref. \cite{HWVE14}, which exhibited  a contextuality scenario based on the set of restricted measurements for which non-contextuality is equivalent to  negativity of the Wigner function. This shows that contextuality with respect to this scenario is a necessary ingredient for magic state distillation. 

This result does not easily generalizes to quibt systems \cite{BDBOR17,LWE18}, where both the definition of the Wigner function and the presence of state independent contextuality with respect to the restricted measurements poses an obstacle to the 
recognition of contextuality as a resource. Several attempts to establish contextuality as a resource in qubit stabilizer sub-theory have  been done so by further restricting to non-contextual subsets of operations within the qubit stabilizer sub-theory. In ref. \cite{DGBR15}, the authors restrict to qubits with real density matrices (rebits) and define a Wigner function for $n$ rebits that is consistent with the restricted stabilizer formalism. With this construction they are able to prove that there is a real QCSI schemes in which universal quantum computation is only possible in the presence of contextuality. 
In ref. \cite{RBDOB17}, the authors show that if non-negative Wigner functions remain non-negative
under free measurements, then contextuality and Wigner function negativity are necessary resources for universal quantum computation on these schemes. The result on contextuality
is however strictly stronger than the result
on Wigner functions, since different from the qudit case \cite{DOBBR17}, qubit magic states can have negative Wigner functions but
still be non-contextual.
These results where later generalized in ref. \cite{BDBOR17}, that shows that if the set of available measurements in the scheme is such that there exists a quantum states that does not exhibit contextuality, then contextuality is a necessary resource for universal quantum computation on these schemes.

\subsubsection*{Contextuality and measurement based quantum computation}

A $\ell_d$-measurement-based quantum computation ($\ell_d$-MBQC) \cite{RBB03} consists of a $n$-site correlated resource state and a control classical computer with restricted computational power. Each site receives the information of a measurement setting to be performed in its system, encoded as an element of $\mathds{Z}_d$, with $d=p^r$ and $p$ prime,  and  returns the outcome of the measurement also encoded as an element of $\mathds{Z}_d$. No communication between sites is allowed during the computation. The control computer  post-processes the measurement outcomes linearly to produce the output of the computation.

For $d=2$ it was shown in ref. \cite{AB09} that nonlinear Boolean functions can be computed with  $\ell_2$-MBQC  with a resource state constructed from a proof of contextuality based on Mermin's GHZ paradox. It was then shown in ref. \cite{Raussendorf13} that deterministic computation of any nonlinear Boolean function with $\ell_2$-MBQC implies that the contextual fraction of the corresponding behavior is equal to one. For probabilistic computation,  $\ell_2$-MBQC which compute a non-linear Boolean function with high probability are necessarily contextual. Ref. \cite{OG17} proves that  bipartite non-local behaviors in the CHSH scenario and behaviors with arbitrarily small violation of a multi-partite GHZ non-contextuality inequality suffice for reliable classical computation, that is, for the evaluation of any Boolean function with success probability bounded away from $\frac{1}{2}$.

These results were later connected with the contextual fraction of the resource state with respect to the available measurements in the computation \cite{ABM17}. Let $f$ be  a  Boolean  function and consider an $\ell_2$-MBQC that uses the behavior $\boldsymbol{B}$ to   compute $f$ with  average  success  probability $p_S$ overall possible inputs, and corresponding average failure probability $p_F=1-p_S$.  Then
\be p_F\geq \left(1-\mathcal{CF}\left(\boldsymbol{B}\right)\right)\nu\left(f\right), \ee
where $\nu\left(f\right)$ is the average distance of $f$ to the closest $\mathds{Z}_2$-linear function\footnote{The average distance between  two  Boolean  functions $f,g:2^m\rightarrow 2^l$ is   given   by $d(f,g):=\frac{1}{2^{m}}\left|\left\{i\in 2^m\vert f(i)\neq g(i)\right\}\right|.$}.

In the qudit case, however, examples of non-contextual $\ell_d$-MBQCs with local dimension $d\geq 3$ that evaluate nonlinear functions can be found \cite{FRB18}. Nevertheless, it is still possible to connect contextuality with quantum advantages. In this direction, ref. \cite{HWB11} shows that the evaluation of a sufficiently high order polynomial function on a multi-qudit system provides  a proof of contextuality. This problem was also investigated in ref. \cite{FRB18}, that besides reproducing the result of \cite{HWB11},  emphasised the distinctive role of contextuality in individual sites versus  strong correlations between the sites.

\subsubsection*{Memory cost of simulating contextuality}

Contextuality  can be simulated by classical models with memory and  the efficiency of such simulations can help understand the difference between quantum and classical systems. In any such simulation, the system changes between  different internal states during the measurement sequence. These states, drawn from a classical state space $\Lambda$,  can be considered as memory  and define the spatial complexity of the simulation.  The model in which the cardinality of $\Lambda$ is minimum is memory-optimal and defines the \emph{memory cost} of the simulation. 

In refs. \cite{KGPLC11,FK17} the authors study the memory cost of simulating quantum contextuality in the Peres-Mermim scenario and show that three internal states are necessary for a perfect simulation of  quantum behaviors. 
This shows that  reproducing the results of sequential measurements on a two-qubit system requires more memory than the information-carrying capacity of the system, given by the Holevo bound. In ref. \cite{KWB18} it is shown that  contextuality in a quantum sub-theory puts a lower bound on the cardinality of the  state space used in any classical  simulation of  this sub-theory. As a consequence of their result, the authors prove that   the minimum amount of bits necessary to simulate the $n$-qubit stabilizer sub-theory grows quadratically with $n$, in contrast with the qudit case with $d$ an odd prime \cite{DOBBR17}, where an efficient simulation that scales linearly in $n$ can be constructed from a particular choice of Wigner representation, which is always positive for stabilizer states.

\section{Conclusion}
\label{sec:conclusion}

In this contribution, we reviewed some of the recent developments towards a unified resource theory for contextuality. 
Although these results highlight contextuality as a possible operational resource, the understanding of the connection of these constructions with
practical applications is still in its infancy. This is the most important and the
most challenging ingredient in a resource theory. Other important question regarding contextuality as a resource remain open, such as the possibility of contextuality distillation, the role of catalysts, conversion rates and the possibility of finding an explicit parametrization of larger classes of free operations for contextuality.
It is also important to investigate the role of other forms of non-classicality in quantum advantage \cite{DA18,MK18} and, although quantifiers can be adapted to the contextuality-by-default framework of ref. \cite{DKC16}, it is still an open problem to find a version of the non-contextual wirings to this extended notion of contextuality.

We point out that although the main application of a resource theory is to understand the role of a physical property as an operational resource, this construction can be interesting on its own and it can  give  insight about the physical property under consideration. For example, in ref. \cite{DBAC18} the authors use  contextuality  quantifiers to explore the geometry of the set of behaviors, finding the approximate relative volume of the non-contextual set in relation to the non-disturbing set.

\subsection*{Acknowledgments}
The author thanks  Ad\'an Cabello, Ehtibar  Dzhafarov, Emily Tyhurst, Ernesto Galv\~ao, Jan-\AA{}ke Larsson, Jingfang Zhou,  Leandro Aolita, Marcelo Terra Cunha, Pawe\l{} Horodecki, Pawe\l{} Kurzy\'{n}ski,  Rui Soares Barbosa, Samsom Abramsky, Shane Mansfield and all the participants of the Winer Memorial Lectures at Purdue University for valuable discussions. The author thanks Ehtibar Dzhafarov, Maria Kon, and V\'ictor H. Cervantes for the organization of the event and Purdue University for its support and hospitality. The author  acknowledges financial support from the Brazilian ministries and agencies MEC and 
MCTIC,  INCT-IQ, FAPEMIG, and CNPq Universal grant  n. 431443/2018-1. 

\bibliographystyle{apalike}
\bibliography{biblio}

\end{document}